\begin{document}
	
\title{\textbf{Shor's Algorithm Does Not Factor Large Integers in the Presence of Noise}}
\author{
Jin-Yi Cai \\ {University of Wisconsin-Madison} \\ {\tt jyc@cs.wisc.edu}
\\
\\
}
\date{}
\maketitle

\pagenumbering{gobble}

\begin{abstract}
We consider Shor's quantum factoring algorithm in the setting of
noisy quantum gates. Under a 
generic model of random noise
for (controlled) rotation gates, we prove that the algorithm does not
factor integers of the form $pq$ when the noise exceeds a vanishingly small
level in terms of $n$ --- the number of bits of the integer to be factored,
 where $p$ and $q$
are from a well-defined  set of primes of positive density.
We further prove that with probability $1 - o(1)$
over random prime pairs $(p,q)$,
Shor's factoring algorithm does not factor numbers of the form $pq$,
with the same level of random noise present.
%
\end{abstract}

\pagenumbering{arabic}
\setcounter{page}{1}
\section{Introduction}\label{sec:introduction}

One of the most stunning achievements of computer science in the last several decades
is Shor's quantum algorithm to factor large integers~\cite{shor-conf,shor}.
The algorithm provably can factor an $n$-bit integer in polynomial time
with high probability, assuming certain quantum operations can be performed.
These are called quantum logic gates. In particular, they include
the familiar Hadamard gate $H =\frac{1}{\sqrt{2}} \left[\begin{smallmatrix} 1 & 1 \\
1&-1 \end{smallmatrix}\right]$, the rotation gates (Phase) 
$S =  \left[\begin{smallmatrix} 1 & 0 \\ 0 & i \end{smallmatrix}\right]$,
($\pi/8$ gate) $T =  \left[\begin{smallmatrix} 1 & 0 \\ 0 & e^{2\pi i/8} \end{smallmatrix}\right]$,
and more generally $R_k =  \left[\begin{smallmatrix} 1 & 0 \\ 0 & e^{{2 \pi i}/{2^k}} \end{smallmatrix}\right]$,
and their controlled versions. Note that $S =  R_2$ and $T = R_3$.

It has often been pointed out that the availability of these quantum gates at high
precision (with arbitrarily small angles in $R_k$ with $k \rightarrow \infty$)
is a challenge, both intellectually and practically on engineering grounds~\cite{gil,levin,Landauer,aaronson}. To a large extent, such concerns motivated
another great  intellectual achievement that is the development of
quantum error correcting codes~\cite{shor2,steane,Calderbank-4,Gottesman-intro-code-arxiv,book}. 
There is a substantial body of work on fault tolerant quantum computing, 
starting with Shor's work~\cite{shor-fault-tolerant}. Strong threshold theorems
are proved which show that in certain error models, 
if the error rate is below a certain threshold, quantum computation
can achieve arbitrarily high accuracy~\cite{Aharonov-Ben-Or,Kitaev,Aliferis,Gottesman-intro-code-arxiv,Steane-263,Knill-178,Aliferis-Gottesman-Preskill-19}.
These are beautiful mathematical theorems. But they fundamentally
assume that the group ${\rm SU}(2)$ \emph{exactly} corresponds to operations on 
a qubit in reality,
especially in its composition---that group composition (in its
infinite precision defined over $\mathbb{C}$) exactly corresponds
to sequential application of realizable quantum operations.
Opinions differ, as to whether such arbitrary precision is ever achievable.
It is certainly a possibility.
However, this author is skeptical about this, based on
the belief that quantum mechanics itself (just as any other physical theory) is 
not, and is not meant to be, 
infinitely accurate when comparing reality with what the mathematical
statements say in the theory (some speculations are in Section~\ref{Comments.tex}).
%
Meanwhile, enormous efforts have been underway in
the past few decades, and with much renewed momentum and enthusiasm more recently,
  to achieve ever increasingly accurate hardware implementations of quantum circuitry.

In this paper, we consider Shor's quantum factoring algorithm in
the setting where each quantum controlled rotation gate is
subject to a small random noise in the angle. We assume
each application of the controlled-$R_k$ gate is given an independent random error
of angle $e^{{2 \pi i} \epsilon r/{2^k}}$.
Thus, when the control bit is 1, the operator $R_k$ is substituted by 
$\widetilde{R_k} = 
 \left[\begin{smallmatrix} 1 & 0 \\
0 & e^{{2 \pi i( 1 + \epsilon r)}/{2^k}} \end{smallmatrix}\right]$,
 where $r$ is an \emph{independent} noise random variable
 distributed $r \sim N(0, 1)$, and $ \epsilon$ is a global magnitude parameter.
So, the controlled-$\widetilde{R_k}$ gate is
$\left[\begin{smallmatrix} 1 & 0 & 0 & 0  \\ 
0 & 1 & 0 & 0 \\
0 & 0 & 1 & 0\\
0 & 0 & 0 & \rho_k \xi_k \end{smallmatrix}\right]$, where
$\rho_k = \rho_{k,  \epsilon} =  e^{{2 \pi i \epsilon r}/{2^k}}$ and 
$\xi_k = e^{{2 \pi i}/{2^k}}$.
We show that there exist positive constants $c, c' >0$ such that
if $\epsilon > c n^{-1/3}$, 
then  Shor's  algorithm  does not factor
$n$-bit integers of the form $pq$, where $p$ and $q$
are from a well-defined  set of primes of density $> c'$.
To the best knowledge of this author, this is the first provable statement
of such failure of Shor's algorithm under any error model.

The noise model is similar to that of~\cite{Nam-B-2014}
(see also~\cite{Fowler,Nam-B-2013,Nam-B-2015}). The specific random noise model
including
the independent normal distribution picked in this paper
is not essential, as the proof will clearly show, but it is chosen
to present the essential idea of the proof most transparently. 
For example, the noise r.v.~$r$ being distributed $\sim N(0, 1)$
can be replaced by any reasonable alternative distribution such as
uniform  $U[-1, 1]$ or uniform bits from $\{-1, 0, 1\}$. 
While each individual controlled-$R_k$ gate is assumed to be
accompanied by an independent
r.v. $r$ for noise, when an individual controlled-$R_k$ gate is applied,
the same randomly perturbed  controlled-$R_k$ gate is 
applied to each term in a sum of superpositions of quantum states.
Regarding the random noise model,
we do not make any claim that this model accurately reflects 
``reality''; our purpose is  only to show that \emph{some} vanishing amount of
noise can already provably destroy the algorithm.

An important modification of Shor's algorithm by Coppersmith~\cite{Coppersmith}
shows that if we just ignore (not to perform) all (controlled-) $R_k$-gates
for sufficiently large $k \ge b$, where $b$ is some global parameter,
then  Shor's algorithm still retains its
effectiveness (and uses a reduced number of quantum gates).
The specific suggested change~\cite{Coppersmith} for $500$-qubits, which would require 
rotations of magnitude $2 \pi/2^{500}$ in Shor's original algorithm,
is to ignore all rotations of angle smaller than $2 \pi/2^{20}$.
It is estimated that this would incur an error on the order of $1\%$
in the probability of each desirable final state. Asymptotically,
Coppersmith improves the precision requirement of exponentially small angles
to just slightly less than $\pi/n$. This is of enormous practical implications.
This version of Shor's algorithm is called the ``banded'' version with parameter $b$,
which is set to be slightly greater than $\log n$, rather than $n$ in the
original version.
Nonetheless, rotation gates (as primitive steps
of the algorithm) of asymptotically infinitely small angles
would still be required as $n$, the number of bits to be factored, tends to
infinity. 

Our result is consistent with Coppersmith's improvement.
Indeed we will present our proof in the ``banded'' version,
with perfect controlled-$R_k$-gates for all
 $k < b$, but  every controlled-$R_k$-gate is replaced by
 a controlled-$\widetilde{R_k}$-gate for all $k \ge b$, i.e., it
is  independently  perturbed by
 a  random noise. 
Our negative result will be stated in terms of $b+ \log_2 ( 1/\epsilon)$.
When  $b+ \log_2 \left(1/\epsilon\right) < \frac{1}{3} \log_2 n - c$
for some constant $c>0$, the noise takes hold
so as to destroy the desired peak in the probability of observing
a useful state that leads to factorization. This condition is essentially equivalent
to having both $b$ being less than a small constant multipple of
 $\log n$ 
and $\epsilon$ greater than the reciprocal of a  small positive power  of $n$.
We prove that, under this condition in this noise model,
 Shor's  algorithm  does not factor
$n$-bit integers of the form $pq$, where $p$ and $q$
are from a well-defined set of primes of positive density $c' >0$.~\footnote{We note that
the results from~\cite{Fowler,Nam-B-2013,Nam-B-2014,Nam-B-2015} are generally
stated in the opposite direction. Under plausible, but ultimately heuristic,
 assumptions for the bahavior
of various sums, augmented by numerical simulations, they suggest that
if $b$ is not too large compared to $n$, Shor's  algorithm
\emph{can} tolerate imprecisions of rotation angles.
Some small concrete values of $n$ are on the order of 10-qubits ($n= 10, 14$).
These values are quite outside the range where our proof applies.
Their numerical simulation does seem to suggest a logarithmic threshold of $b$.
Thus, these positive
results are not logically inconsistent with, and in fact, complement
 our proof. N.B. the notation $b$ in~\cite{Nam-B-2013} is our $b-2$.}
The proof will in fact show that, the same result holds under the same condition 
 $b+ \log_2 \left(1/\epsilon\right) < \frac{1}{3} \log_2 n - c$,
 even if
the noise gates are applied only at the single level $R_b$, with all
other controlled-$R_k$-gates applied perfectly for $k \ne b$ (\emph{or alternatively},
no controlled-$R_k$-gates are applied at all for $k>b$ as in the
banded version by Coppersmith).

\begin{theorem}\label{main1}
There exist constants $c, c' >0$, such that
if each controlled-$R_k$-gate in the quantum Fourier transform circuit is replaced by
controlled-$\widetilde{R_k}$-gate for all $k \ge b$, where
$b+ \log_2 \left(1/\epsilon\right) < \frac{1}{3} \log_2 n - c$,
then with exponentially small exceptional probability,
Shor's algorithm does not factor
$n$-bit integers of the form $pq$, where $p$ and $q$
are from a well-defined set of primes of density $> c'$.
\end{theorem}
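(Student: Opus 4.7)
The plan is to track the amplitude produced by the noisy QFT at the ``useful'' outputs $y \approx kQ/r$, where $r$ is the order of the randomly chosen base modulo $pq$, and show that under the condition $b + \log_2(1/\epsilon) < \tfrac{1}{3}\log_2 n - c$ these amplitudes collapse to essentially the uniform value $O(1/Q)$, so that the continued-fractions post-processing cannot recover $r$ with any non-negligible probability. First I would write the post-QFT amplitude on the periodic state of common difference $r$: each noisy controlled-$\widetilde{R_k}$ gate multiplies every computational-basis term $|x\rangle$ in the superposition by an independent random phase factor, giving
\[
\widehat{f}(y) \;=\; \frac{1}{\sqrt{M}} \sum_{s=0}^{M-1} \exp\!\left(-\,2\pi i \, \frac{(x_0+sr)\, y}{Q} \;+\; i\, \Phi(x_0+sr,\, y)\right),
\]
where $\Phi(x,y) = 2\pi\epsilon \sum_{j,k \ge b} r_{jk}\, x_j\, y_k / 2^{k}$ is a real Gaussian linear form in the independent noise variables $r_{jk} \sim N(0,1)$ attached to the noisy gates, summed over all gate positions $(j,k)$ with $k \ge b$.

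Second, I would compute $\mathbb{E}\bigl[|\widehat{f}(y)|^2\bigr]$ over the noise by expanding the square into a double sum over pairs $(s,s')$ and using the Gaussian identity $\mathbb{E}[e^{it r_{jk}}] = e^{-t^2/2}$ together with independence of the $r_{jk}$. Every off-diagonal pair picks up a damping factor $\exp\!\bigl(-\tfrac{1}{2}\,\mathrm{Var}[\Phi(x_0+sr,y) - \Phi(x_0+s'r,y)]\bigr)$, whose exponent scales like $\epsilon^2 \cdot H(s,s',y)/4^b$, where $H(s,s',y)$ counts the bit-positions at which the binary expansions of $(s-s')r$ and $y$ cooperate. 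The condition $b + \log_2(1/\epsilon) < \tfrac{1}{3}\log_2 n - c$ is exactly what is needed to force this exponent to grow without bound on a $1-o(1)$ fraction of off-diagonal pairs, so only a vanishing diagonal contribution survives and the expected probability at every candidate peak collapses to $O(1/Q)$. I would then promote this bound in expectation to a high-probability statement via a Gaussian-chaos concentration inequality (e.g.\ Hanson--Wright) applied to the quadratic form $|\widehat{f}(y)|^2$ in the $r_{jk}$, and union-bound over the polynomially many $y$ that continued fractions would accept as candidate peaks, yielding the exponentially small exceptional probability asserted in the theorem.

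Third, for the density-of-primes statement I would restrict to prime pairs $(p,q)$ for which the multiplicative group modulo $pq$ contains an element whose order $r$ is comparable to $\mathrm{lcm}(p-1,q-1)$ and shares no atypical common structure with $Q = 2^{2n}$; classical density results on multiplicative orders carve out such pairs as a positive-density set, so the noise calculation applies uniformly to every useful peak. The main obstacle I foresee is the second step: the noise phases $\Phi(x,y)$ are highly correlated across different $x$ because they share the same underlying $r_{jk}$, so ``decorrelating'' the double sum requires a careful accounting of how the difference $\Phi(x,y)-\Phi(x',y)$ depends jointly on the bitwise difference $x \oplus x'$ and the bits of $y$. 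One must in particular rule out arithmetic coincidences between $r$, $Q$, and the banding parameter $b$ that could create low-variance subspaces preserving coherence; extracting exactly the exponent $\tfrac{1}{3}$ rather than $\tfrac{1}{2}$ out of the interplay between the $n$-bit support of $y$ and the $4^{-b}$ per-gate noise weight is where the bookkeeping becomes most intricate.
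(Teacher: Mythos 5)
Your high-level strategy matches the paper's: express the probability of observing a useful state as a squared norm of a $K$-term exponential sum, compute its expectation over the Gaussian noise using the characteristic function $\mathbb{E}[e^{itr}]=e^{-t^2/2}$, and argue that off-diagonal terms are damped because the phase differences have large variance. The paper's Lemma~\ref{lemma-exp-sq-norm-sum} is exactly your second step. However, there is one genuine gap and one piece of superfluous machinery, and the gap sits precisely where you say ``the bookkeeping becomes most intricate.''

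The gap: you treat the assertion that $H(s,s',y)$ — the bit overlap count governing the variance of $\Phi(x_0+sr,y)-\Phi(x_0+s'r,y)$ — is $\Omega(n)$ for a $1-o(1)$ fraction of pairs as if it followed from the threshold condition $b+\log_2(1/\epsilon)<\tfrac13\log_2 n - c$. It does not; that condition only translates a linear lower bound $H=\Omega(n)$ into a damping exponent $\Omega(n^{1/3})$. Establishing $H=\Omega(n)$ is a separate combinatorial/number-theoretic claim, and it is the heart of the paper's argument. The paper proves it in two lemmas: first (Lemma~\ref{lm:most-v-are-good}) that for all but an exponentially small fraction of $j$, the number $v=\lfloor 2^n j/\omega\rfloor$ has $\Omega(n)$ one-bits among positions $v_{n-i}$ for $i$ in a window $\{d+b,\dots,i_0\}$ with $i_0=\lfloor\tfrac34\log_2\omega\rfloor$; this is a lattice-counting argument comparing the uniform distribution on $j$ to the distribution induced by uniform dyadic intervals of $[0,\omega)$, and it crucially requires $\omega>N^{1/3}$ so the dyadic intervals are not too short. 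Second (Lemma~\ref{u-are-mostly-good}) that for all but an exponentially small fraction of pairs $(k,k')$, the low-order bits of $u^{(k)}=u^*+k\omega$ and $u^{(k')}=u^*+k'\omega$ differ in $\Omega(n)$ positions inside the same window; this uses the fact that $\omega/2^{d}$ is a unit mod $2^{i_0-b+1-d}$ where $d=\mathrm{ord}_2(\omega)$, and requires $d<\tfrac12\log_2\omega$. Both the lower bound on $\omega$ and the upper bound on $\mathrm{ord}_2(\omega)$ are exactly what Fouvry's theorem delivers for the positive-density prime set of Theorem~\ref{main1}; your gesture at ``classical density results on multiplicative orders'' is not wrong in spirit but is not a proof, and without the $\mathrm{ord}_2(\omega)$ control your ``low-variance subspaces'' worry is a real failure mode, not merely a technicality. (For random primes, the paper proves these properties separately in the appendix via the Brun--Titchmarsh theorem, but that is Theorem~\ref{main3}, not Theorem~\ref{main1}.)

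The superfluous step: you propose Hanson--Wright concentration and a union bound over candidate $y$'s. Neither is needed. The theorem's meaning of ``does not factor with exponentially small exceptional probability'' is, by the paper's own clarification, a statement about the \emph{expectation} over noise of the algorithm's success probability, and the paper simply sums the expectation bound $\tfrac{1}{2^nK}\bigl(K+2\delta\binom{K}{2}+2(1-\delta)\binom{K}{2}e^{-2\pi^2 t}\bigr)$ over all $v=\lfloor 2^n j/\omega\rfloor$ to obtain $2^{-\Omega(n^{1/3})}$. Markov's inequality then does whatever promoting-to-high-probability one needs; a chaos concentration inequality is overkill and adds no strength to the conclusion. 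Also, the exponent $\tfrac13$ is not extracted from any delicate interplay — it is simply the largest exponent $\beta$ for which $n\cdot(\epsilon/2^b)^2 \ge n^{\beta}$ when $\epsilon/2^b>n^{-1/3}$, chosen to make the final bound a clean $2^{-\Omega(n^{1/3})}$; there is no hidden $\tfrac12$-vs-$\tfrac13$ phenomenon.
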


Here ``exceptional probability''
is over
the random choices of Shor’s algorithm as well as probabilistic
outcomes of quantum measurements. 
More precisely, the expectation over random noise $r$'s,
of the success probability (over the random choices of the algorithm
and quantum measurements) of the algorithm is exponentially small in $n$.
This will be the meaning of
 ``does not factor'' below. 


\begin{theorem}\label{main2}
If $b+ \log_2 \left(1/\epsilon\right) < \frac{1}{3} \log_2 n - c$,
then the statement in Theorem~\ref{main1} still holds,
if only each  controlled-$R_b$-gate
is  replaced by a 
controlled-$\widetilde{R_b}$-gate while all other
 controlled-$R_k$-gates remain unchanged.
Alternatively, the same statement holds
if each  controlled-$R_k$-gate
is: (1) applied perfectly for $k<b$, (2) replaced by a
controlled-$\widetilde{R_b}$-gate for $k=b$, and (3) deleted for $k>b$.
\end{theorem}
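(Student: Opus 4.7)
The plan is to leverage the structure of the proof of Theorem~\ref{main1} rather than redo it. The central second-moment computation there must control the expected squared amplitude that the noisy QFT deposits at an output state $|y\rangle$, by tracking the cumulative phase variance imparted to each basis-state term of the amplitude sum. Per-gate, a controlled-$\widetilde{R_k}$ injects random phase variance of order $(2\pi\epsilon/2^k)^2$, so the contributions from distinct levels $k\ge b$ form a geometric series in $2^{-2k}$ that is dominated up to a universal constant by the single level $k=b$. Under the hypothesis $b+\log_2(1/\epsilon)<\frac{1}{3}\log_2 n - c$, this level-$b$ contribution alone is $\Theta(n\epsilon^2/2^{2b})=\Omega(n^{1/3})$, precisely the magnitude needed to produce the $\exp(-\Omega(n^{1/3}))$ suppression of peak amplitudes that drives Theorem~\ref{main1}. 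The upshot is that the noise at levels $k>b$ plays essentially no role in the argument, and so the theorem's conclusion should survive removing it.

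First I would treat the version in which only controlled-$R_b$ is perturbed while all other controlled-$R_k$ remain ideal. Here I would simply replay the proof of Theorem~\ref{main1} with the $k\ne b$ gates restored to their perfect form. Ideal gates contribute only deterministic phases, which factor out of the second-moment bound on random-phase dispersion, so the variance computation in the previous paragraph transfers unchanged. Averaging the squared amplitude further over Shor's classical coin flips and quantum measurement outcomes then yields the exponentially small success probability. Second, I would handle the banded variant in which controlled-$R_k$ is deleted for $k>b$: the same bound applies term by term to the banded circuit, since whatever amplitude the noise-free banded circuit would deliver to a target $|y\rangle$, multiplying each contributing term by an independent random phase of variance $\Omega(n^{1/3})$ suppresses it (and therefore any putative peak) by $\exp(-\Omega(n^{1/3}))$. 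Notably this works whether or not Coppersmith's banded QFT with the too-small parameter $b$ permitted by our hypothesis would itself factor $pq$ noiselessly; the level-$b$ noise washes out any peak the banded circuit might otherwise produce.

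The main obstacle I foresee is verifying that the random phase deposited on each peak-relevant basis state really is the sum of $\Theta(n)$ genuinely independent noise variables of the claimed per-gate variance. Each controlled-$R_b$ gate fires only when both of its target qubits carry a $|1\rangle$, and the same random $r$ attached to a given gate is shared across many terms of the amplitude sum, creating correlations across terms. The second-moment argument does not care about those cross-term correlations, only about the per-term variance, but one must still exhibit $\Omega(n)$ distinct level-$b$ gates that are ``activated'' along the circuit evolution of each intermediate state supporting Shor's target peak. Ensuring this is where the well-defined, positive-density set of prime pairs $(p,q)$ from Theorem~\ref{main1} enters: that set is tailored so the associated period $r$ produces peak-supporting states whose bit patterns activate enough controlled-$R_b$ gates, and carrying this combinatorial-arithmetic input into the banded and single-level settings without losing the density $c'$ is the step that I expect to need the most care.
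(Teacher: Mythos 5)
Your conclusion is right and your overall plan is the one the paper takes: Theorem~\ref{main2} does follow essentially for free from the proof of Theorem~\ref{main1}, because that proof only ever uses the level-$b$ noise. But the mechanism you describe has a genuine conceptual error that would need repair before the argument could be made rigorous.

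The paper's proof of Theorem~\ref{main1} does not sum variance contributions across levels $k\ge b$ and argue that $k=b$ dominates via a geometric series; it simply \emph{discards} all noise terms except the level-$b$ ones $r^{(s)}_0$ at the outset (equation~(\ref{eqn:mainsum-for-r_0})). All other noise, just like the deterministic phases, is absorbed into the $\varphi_k$ of Lemma~\ref{lemma-exp-sq-norm-sum}. Since nothing but the level-$b$ variables is ever invoked, Theorem~\ref{main2} is literally a corollary: in either the single-level or the banded variant, the retained level-$b$ noise is identical and the discarded/deterministic parts are still legitimate inputs to $\varphi_k$. No additional bound on the ``tail'' levels is needed at all.

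More importantly, your statement that ``the second-moment argument does not care about those cross-term correlations, only about the per-term variance'' is wrong, and it is exactly the opposite of what drives Lemma~\ref{lemma-exp-sq-norm-sum}. The quantity being bounded is $\mathbf{E}\bigl[\lvert\sum_k \xi^{\Sigma_k}\rvert^2\bigr] = K + \sum_{j\ne k}\mathbf{E}\bigl[\xi^{\Sigma_j-\Sigma_k}\bigr]$, and what makes the off-diagonal terms small is large variance of the \emph{pairwise differences} $\Sigma_j - \Sigma_k$, which reduces to the symmetric differences $S_j\Delta S_k$ having cardinality $\ge (2^b/\epsilon)^2 t$. If every term had the same activation set $S_k$, each term would individually carry huge random phase but $\Sigma_j - \Sigma_k$ would be deterministic and the second moment would be unperturbed. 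So the ``main obstacle'' is not exhibiting $\Omega(n)$ activated gates per term; it is showing, as Lemma~\ref{u-are-mostly-good} does via the lattice counting argument on $u^{(k)}=u^*+k\omega\bmod 2^{i_0-b+1}$, that for all but a $2^{-\Omega(n)}$ fraction of pairs $(k,k')$, the bits $u^{(k)}_{i-b}$ and $u^{(k')}_{i-b}$ disagree on $\Omega(n)$ positions $i\in T_j$. Your proposal, as written, elides this and would not close. Finally, the density $c'$ carries over to Theorem~\ref{main2} with no extra work: it comes from Fouvry's theorem and Lemma~\ref{period-large-Fouvry}, which are about $N$ and $\omega$ only and are untouched by how the noise is placed in the circuit.
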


Our proof focuses on the essential ``period-finding'' 
part using quantum Fourier transform (QFT) in
Shor's  algorithm.
In our proof, we use a theorem of Fouvry~\cite{Fouvry}.
%
%
%
This theorem states that
the set of all primes $p$ such that the
largest prime factor in $p-1$ is greater than $p^{2/3}$
has positive density among all primes.
We use this theorem to produce candidate inputs of the form $N= pq$
 to Shor's  algorithm  where $p$ and $q$ are of this type, 
and argue that a random element $x \in \mathbb{Z}_{N}^*$ has  (exponentially)
large order  $\omega  = \omega_N(x)$ as an element of the multiplicative group $\mathbb{Z}_{N}^*$.
This large  order    $\omega$ allows us to give a lower bound
for a lattice counting argument, which leads to a sufficiently large
number of independent perturbations in the complex arguments (in the exponent) 
in a crucial sum of exponentials,
(which would have been a perfect geometric sum without noise) in the analysis of
Shor's algorithm. This perturbation, at the appropriate setting of
parameters, destroys this geometric sum, and degrades the probability of
observing any useful quantum state to \emph{negligible}, and thus fails to gain any
useful information on the period  $\omega$.

Our proof is actually more generally applicable.
In an appendix we prove the following theorem:

\begin{theorem}\label{main3}
There exists a constant $c>0$, such that
for
random  primes $p$ and $q$ chosen  uniformly from
all primes of binary length $m$,
if  $b+ \log_2 \left(1/\epsilon\right) < \frac{1}{3} \log_2 m - c$,
as $m \rightarrow \infty$
with probability $1- o(1)$, 
Shor's algorithm with noisy rotation gates does not factor
$N= pq$.
\end{theorem}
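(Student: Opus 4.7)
The plan is to reduce Theorem~\ref{main3} to Theorem~\ref{main2} by replacing Fouvry's density theorem, which Theorems~\ref{main1} and~\ref{main2} use to guarantee that a random $x \in \mathbb{Z}_N^*$ has order $\omega_N(x)$ with a large prime divisor, by a $1-o(1)$ probabilistic statement about the factorization of $p-1$ for a uniformly random prime $p$. Concretely, I would first isolate from the proofs of Theorems~\ref{main1} and~\ref{main2} the precise quantitative hypothesis on the largest prime divisor $\ell$ of $\omega$ used by the lattice counting argument: from the window $b + \log_2(1/\epsilon) < \frac{1}{3}\log_2 m - c$ and the central-limit-type scaling of the perturbed exponential sum (variance of the order of $\omega\,(\epsilon/2^b)^2$), one expects this hypothesis to take the form $\log_2 \ell \geq$ (a modest polynomial in $m$). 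Fouvry's theorem supplies the enormous bound $\ell \geq p^{2/3}$; for Theorem~\ref{main3} anything tending to infinity faster than every polynomial in $m$ will suffice.

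Next, I would invoke the standard theory of smooth numbers to verify the hypothesis for random primes with probability $1-o(1)$. By the Dickman--de Bruijn estimate together with its extension to shifted primes (due to Fouvry, and to Pomerance--Shparlinski and others), for any function $u(m) \to \infty$ the density of primes $p$ of bit length $m$ for which $p-1$ is $p^{1/u(m)}$-smooth is at most $\rho(u(m)) + o(1) = o(1)$, where $\rho$ is Dickman's function. Taking $u(m)$ tending to infinity slowly (say $u(m) = \log\log m$), with probability $1-o(1)$ the largest prime factor $\ell_p$ of $p-1$ satisfies $\ell_p \geq p^{1/\log\log m}$, which is vastly larger than any polynomial in $m$; independently the same holds for $\ell_q \mid q-1$. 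By the Chinese Remainder isomorphism $\mathbb{Z}_N^* \cong \mathbb{Z}_{p-1} \times \mathbb{Z}_{q-1}$, the chance that a uniformly random $x \in \mathbb{Z}_N^*$ has order not divisible by $\ell_p$ is exactly $1/\ell_p$, and likewise for $\ell_q$, so $\omega_N(x)$ has a large prime factor with probability $1-o(1)$. Substituting this into the proof of Theorem~\ref{main2} in place of the Fouvry-based argument gives the conclusion.

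The main obstacle will be extracting from the proof of Theorem~\ref{main2} the exact form of the threshold that the lattice counting and perturbed exponential sum estimates impose on $\ell$, and confirming that a threshold polynomial in $m$ is genuinely enough. If the argument in fact demands $\ell \geq p^{\Omega(1)}$, one still obtains a $1-o(1)$ lower bound via the same smooth-number input, but only after allowing the smoothness exponent to vary with $m$ and re-tuning the constant $c$; the bookkeeping is then somewhat more delicate. A minor side issue is the event $\ell_p = \ell_q$, which has probability $o(1)$ over the random prime pair and can be absorbed.
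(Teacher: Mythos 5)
Your proposal takes a genuinely different route from the paper, but it contains a gap that means it does not prove Theorem~\ref{main3} as stated.

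\paragraph{The gap.}
The main proof does not merely need $\omega_N(x)$ to have a large prime factor; it needs $\omega_N(x)$ itself to satisfy $\log_2 \omega_N(x) = \Omega(m)$ with a fixed positive implied constant. This is because the lattice-counting argument sets $i_0 = \lfloor\frac{3}{4}\log_2\omega\rfloor$ and then harvests $n' = \Omega(i_0)$ uncancelled noise terms, and the final threshold condition is $b + \log_2(1/\epsilon) < \frac{1}{3}\log_2 n'$. If one only has $\omega \ge p^{1/u(m)} = 2^{m/u(m)}$ for some $u(m)\to\infty$, then $n' = O(m/u(m))$ and the best threshold one can prove is $b + \log_2(1/\epsilon) < \frac{1}{3}\log_2 m - \Theta(\log u(m))$, i.e., the subtracted term diverges. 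That does not give ``there exists a constant $c$'' as the theorem requires. Your fallback --- ``if the argument demands $\ell \ge p^{\Omega(1)}$, one still obtains a $1-o(1)$ lower bound via the same smooth-number input'' --- is not correct: for any \emph{fixed} smoothness exponent $\theta\in(0,1)$, the proportion of primes $p$ with $P^+(p-1) > p^{\theta}$ is (conjecturally, and in all known results) a constant in $(0,1)$, essentially $\rho(1/\theta)$; it is not $1-o(1)$. The route through a single large prime factor of $p-1$ is therefore fundamentally a positive-density route, not a $1-o(1)$ route, and cannot reach the statement of Theorem~\ref{main3}. You also leave untouched the second requirement of the main argument, that ${\rm ord}_2(\omega_N(x))$ be $o(\log_2\omega_N(x))$; the largest-prime-factor argument says nothing about the 2-part of $\omega_N(x)$.

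\paragraph{What the paper does instead.}
The paper abandons the ``largest prime factor'' viewpoint entirely in the random-primes case. It proves a H\r{a}stad--Schrift--Shamir type bound (Theorem~\ref{HSS-period}): for random primes $p,q$ of length $m$ and random $g\in\mathbb{Z}_N^*$, $\Pr[\omega_N(g) < \phi(N)/A] = O(m^{2/5}/A^{1/5})$, proved via the Brun--Titchmarsh theorem (Montgomery--Vaughan form) and a partial-summation bound on $\sum 1/\phi(d)^2$. Taking, say, $A = 2^{\epsilon m}$ yields $\omega_N(g) \ge 2^{(2-\epsilon)m}$ with probability $1-o(1)$, which is far stronger than what is available through smooth-number counting and is exactly what the lattice argument needs with a fixed constant $c$. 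The point is that $\omega_N(g)$ can be enormous while none of its prime factors is huge, so bounding $\omega$ directly is the right move. Item (2) is handled with a separate, short Brun--Titchmarsh estimate on primes $p\equiv 1\pmod{2^e}$, giving ${\rm ord}_2(\omega_N(g)) \le c\log_2 m$ with probability $1-o(1)$. If you want to salvage your plan, the fix is to replace the smooth-number input by a direct bound of this flavor on the order of a random element, rather than routing through $P^+(p-1)$.
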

A version analogous to Theorem~\ref{main2} also holds for random primes.

We make a few brief remarks. 
Arguably, factoring integers $N=pq$ for random primes 
$p$ and $q$ is more important in cryptography than for primes that
satisfy the property in Fouvry's theorem,
and the statement of failure probability being  $1- o(1)$ is stronger than
that of positive density guaranteed by Fouvry's theorem.
We present the proof in the main text for the latter,
and relegate the proof of Theorem~\ref{main3}
to the appendix, in order to
 concentrate on the main idea of how random noise degrades the performance
of Shor's algorithm.
The additional work needed for Theorem~\ref{main3}
is mainly of a number theoretic nature,
and for the 
purpose of this paper, of secondary importance.
Also, one can prove other versions of Theorem~\ref{main3}. E.g.,
we can restrict the random primes $p$ and $q$ to be
of length $m$ \emph{and} both $\equiv 3 \bmod 4$, so that 
the numbers $N=pq$ are the so-called
Blum integers, which are favored in cryptography~\cite{cryp-Handbook}.
Despite the strong failure demonstrated by the proof,
our theorems
do not
rule out  the possibility that at some future time,
quantum algorithm is superior to the best ``classical'' factoring algorithms
for factoring integers of a certain
size, in practice. 
But our proof indicates that there is a limit to this possible superiority
when $n$ is large, if arbitrarily small random noise  cannot be eliminated.

Many people have made strong arguments~\cite{book} supporting the viewpoint that
Shor's algorithm presents a convincing evidence
that the so-called
\emph{Strong} Church-Turing thesis needs a necessary modification. This
\emph{Strong} thesis identifies 
efficient computation with P or BPP. The argued-for modification
states that this should be replaced by BQP.
 This author is personally not convinced of this. I will make some comments at
the end of this paper.~\footnote{These  comments are speculative,
and should not be conflated with the theorems proved in the paper.}

\section{Preliminaries}\label{sec:pre}
\paragraph{Fouvry's theorem}
Let $N = pq$, where $p$ and $q$ are distinct odd primes.  By the Chinese
remainder theorem, the multiplicative group $\mathbb{Z}_N^*
= \{ m \in \mathbb{Z}_N \mid \gcd(m, N) =1\}$ (invertible
elements in $\mathbb{Z}_N$) is isomorphic to the direct product
$\mathbb{Z}_p^{*} \times \mathbb{Z}_q^{*}$. Moreover,  $\mathbb{Z}_p^{*}$
is a cyclic group of order $p-1$, and is isomorphic to a direct product
of factors according to the prime factorization of $p-1$; and similarly
for $\mathbb{Z}_q^{*}$.  If $p-1 = 2^e p_1^{e_1} \cdots p_k^{e_k}$,
where $p_1 < \ldots < p_k$ are distinct odd primes, then  $\mathbb{Z}_p^{*}$
is isomorphic to  $\mathbb{Z}_{2^e} \times   \mathbb{Z}_{{p^{e_1}_1}}
 \times  \cdots \times  \mathbb{Z}_{{p^{e_k}_k}}$.
%
Let $P^+(m)$ denote the largest prime in the  prime factorization of $m$.
\begin{theorem}[Fouvry]\label{Fouvry-thm}
There exist constants $c >0$ and $n_0>0$, such that for all $x>n_0$,
\[|\{ p \mid p \mbox{ is a prime, } p < x, \mbox{ and } P^+(p-1) > p^{2/3} \}|
\ge c \frac{x}{\log x}.\]
\end{theorem}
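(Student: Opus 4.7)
The plan is to detect primes $p \le x$ with $P^+(p-1) > p^{2/3}$ by lower-bounding them via primes of the form $p = kq + 1$ with $q$ prime, $q > x^{2/3}$, and $k \le (x-1)/q < x^{1/3}$; since $q > x^{2/3} \ge p^{2/3}$ each such pair automatically contributes to the count. The target reduces to a lower bound of order $x/\log x$ for
\[
\sum_{x^{2/3} < q \le x,\, q \text{ prime}} \pi(x; q, 1),
\]
where $\pi(x; q, a)$ counts primes $p \le x$ with $p \equiv a \pmod{q}$. I would dyadically split the $q$-range into $(Q, 2Q]$, and on each dyadic piece try to establish a mean-value estimate
\[
\sum_{Q < q \le 2Q,\, q \text{ prime}} \pi(x; q, 1) \;=\; \pi(x) \sum_{Q < q \le 2Q,\, q \text{ prime}} \frac{1}{\phi(q)} \;+\; O\!\left(\frac{x}{(\log x)^A}\right),
\]
valid uniformly for $Q$ up to $x^{2/3+\delta}$ for some fixed $\delta > 0$ and any fixed $A$.

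For moduli $Q \le x^{1/2-\epsilon}$ the Bombieri--Vinogradov theorem is directly sufficient. The genuine difficulty begins past the $x^{1/2}$ barrier. There I would follow the dispersion method of Linnik: decompose the prime indicator via a Vaughan or Heath--Brown combinatorial identity into type-I and type-II bilinear forms, and reduce the error in the arithmetic-progression count to sums of incomplete Kloosterman sums. Those sums can be controlled by the Deshouillers--Iwaniec estimates, which rest on Deligne's bound for exponential sums over varieties and on the spectral theory of automorphic forms, provided the moduli are either prime or ``well-factorable'' --- a condition satisfied here since $q$ is prime.

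The main obstacle, and the heart of Fouvry's original contribution, is pushing the admissible range of moduli past $x^{1/2}$ all the way up to $x^{2/3+\delta}$. An argument truncated at $x^{1/2}$ would only yield $P^+(p-1) > p^{1/2}$ for a positive density of primes, which is the earlier theorem of Goldfeld; obtaining the exponent $2/3$ genuinely requires the post-Bombieri--Vinogradov machinery above. Once the mean-value estimate is secured, summing the main term dyadically over $Q \in (x^{2/3}, x/2]$ and noting that
\[
\sum_{x^{2/3} < q \le x/2,\, q \text{ prime}} \frac{1}{q-1} \;=\; \log(3/2) + O\!\left(\frac{1}{\log x}\right)
\]
by the Mertens partial sum, produces the lower bound $\gg \pi(x) \gg x/\log x$, giving the claimed positive density.
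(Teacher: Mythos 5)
The paper does not prove this statement---it cites it verbatim from Fouvry~\cite{Fouvry}---so there is no in-paper proof to compare against. Your proposal is instead a sketch of Fouvry's actual strategy, and you have identified the skeleton correctly: detect the primes $p$ with $P^+(p-1)>p^{2/3}$ by counting primes in the progressions $p\equiv 1 \pmod q$ over prime moduli $q>x^{2/3}$, note that each such $p$ is counted at most once (since $p-1<x$ cannot have two prime factors exceeding $x^{2/3}$, and any prime $q\mid p-1$ with $q>\sqrt{p-1}$ is automatically $P^+(p-1)$), and then ask for a Bombieri--Vinogradov-type mean-value estimate in arithmetic progressions with level of distribution past $2/3$ for the fixed residue class $a=1$. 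The Mertens evaluation of the resulting main term as roughly $\log(3/2)\,\pi(x)$ is also correct.

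The gap is that the proposal defers the entire content of the theorem to a single unestablished step: the equidistribution estimate ``valid uniformly for $Q$ up to $x^{2/3+\delta}$.'' Naming the dispersion method, a Vaughan or Heath--Brown combinatorial identity, the Deshouillers--Iwaniec bounds, Deligne's estimates, and the spectral theory of automorphic forms lists the correct toolbox but does not execute it. One must actually carry out the bilinear decomposition, show that the exponential sums that emerge lie in the ranges where the Kloosterman-sum estimates apply, and verify that the restriction to the single residue $a=1$ (which is precisely what allows one to bypass the generic $x^{1/2}$ barrier) produces a genuine asymptotic rather than merely an upper bound. This is the body of Fouvry's paper, and the exponent $2/3$ is the numerical output of that analysis rather than a clean a-priori threshold; it cannot be asserted by citation. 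As written, your outline correctly locates where the difficulty lives but does not supply it, so it does not constitute a proof.
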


We say  a prime $p$ satisfies the Fouvry property  if $P^+(p-1) > p^{2/3}$.
If $N=pq$, where $p$ and $q$ are distinct odd primes satisfying the
Fouvry property,
then clearly $p' =  P^+(p-1)$
appears with exponent 1 in the factorization of $p-1$,
 and so does $P^+(q-1)$ in the factorization of $q-1$. 
If $p' = P^+(p-1) > P^+(q-1)$, then $\mathbb{Z}_{p'}$
appears as an isolated  factor in the direct product form of $\mathbb{Z}_N^*$.
Thus, with probability $\ge 1- 1/p' > 1 - \frac{1}{\max\{p^{2/3}, q^{2/3}\}}
\ge 1 - N^{-1/3}$,  a random element $x$ in  $\mathbb{Z}_N^{*}$
has order at least $p'  >  \max\{p^{2/3}, q^{2/3}\}  \ge N^{1/3} $.
If it so happens that $p' = P^+(p-1) = P^+(q-1)$, then
$\mathbb{Z}_{p'} \times \mathbb{Z}_{p'}$ 
appears as a factor in the direct product form of $\mathbb{Z}_N^*$.
In this case,  a random element $x$ in  $\mathbb{Z}_N^{*}$
also has order at least $p' >  N^{1/3}$ with  probability $\ge 1- 1/(p')^2
\ge 1 - N^{-2/3}$.
Thus, in either case, in terms of the number of bits,
such products $N=pq$ have the property that a 
random element $x$ in  $\mathbb{Z}_N^{*}$ has an exponentially
large period, $\omega = \omega_N(x) \ge  \max\{P^+(p-1), P^+(q-1)\} > N^{1/3}$,
with exponentially small exceptional probability.
Below we  assume  $\omega$ has this property.

Denote by ${\rm ord}_2(x)$ the  highest power of $2$ that divides $x$.
If $e = {\rm ord}_2(p-1)$, and $e' = {\rm ord}_2(q-1)$,
then we have $2^e < (P^+(p-1))^{1/2}$
and $2^{e'} < (P^+(q-1))^{1/2}$,
and thus $\omega = \omega_N(x)$
satisfies ${\rm ord}_2(\omega) \le \max\{e, e'\} < \frac{\log_2 \omega}{2}$,
for any $x \in \mathbb{Z}_N^{*}$.
We conclude:

\begin{lemma}\label{period-large-Fouvry}
Let  $p$ and $q$  be distinct odd primes  satisfying the Fouvry property,
and let $N=pq$, then over
a random $x \in \mathbb{Z}_N^{*}$,
\[{\rm Pr.}
\left(
\omega_N(x) > N^{1/3}  ~~\mbox{\rm  and }~~ {\rm ord}_2(\omega_N(x)) 
< \frac{\log_2 \omega_N(x)}{2} \right)
> 1 - \frac{1}{N^{1/3}}.\]
\end{lemma}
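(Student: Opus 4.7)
The plan is to prove both parts of the event in a single sweep by exploiting the Chinese Remainder decomposition $\mathbb{Z}_N^* \cong \mathbb{Z}_p^* \times \mathbb{Z}_q^*$ together with the cyclic structure of $\mathbb{Z}_p^*$ and $\mathbb{Z}_q^*$. Since each is cyclic of order $p-1$ (resp.\ $q-1$), writing out the prime factorizations $p-1 = 2^e p_1^{e_1}\cdots p_k^{e_k}$ and $q-1 = 2^{e'} q_1^{f_1}\cdots q_\ell^{f_\ell}$ gives a direct product decomposition of each into cyclic groups of prime power order. Let $p' = P^+(p-1)$ and $q' = P^+(q-1)$; the Fouvry property forces $p'$ and $q'$ to appear with exponent $1$ in their respective factorizations, so the corresponding cyclic factors $\mathbb{Z}_{p'}$ and $\mathbb{Z}_{q'}$ appear cleanly (not merged with smaller primes) inside $\mathbb{Z}_N^*$.

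For the period lower bound, I would split into two cases. If $p' \neq q'$, say $p' > q'$, then $\mathbb{Z}_{p'}$ is an isolated direct factor of $\mathbb{Z}_N^*$, so a uniformly random $x$ has nonzero projection onto this factor with probability $1 - 1/p'$, which forces $p' \mid \omega_N(x)$ and hence $\omega_N(x) \geq p' > p^{2/3} \geq N^{1/3}$. If $p' = q'$, then $\mathbb{Z}_{p'} \times \mathbb{Z}_{p'}$ is a direct factor; the probability that \emph{both} coordinates vanish is $1/(p')^2$, so again $\omega_N(x) \geq p' > N^{1/3}$ with probability at least $1 - 1/(p')^2$. In either case the failure probability is at most $1/p' \leq N^{-1/3}$.

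For the $2$-adic bound, the key observation is that the Fouvry property $P^+(p-1) > p^{2/3}$ implies $2^e \cdot p^{2/3} \leq 2^e \cdot P^+(p-1) \leq p-1 < p$, so $2^e < p^{1/3} < (P^+(p-1))^{1/2}$, and symmetrically $2^{e'} < (P^+(q-1))^{1/2}$. Now $\omega_N(x)$ divides $\mathrm{lcm}(p-1, q-1)$, whose $2$-part equals $2^{\max\{e,e'\}}$, so $\mathrm{ord}_2(\omega_N(x)) \leq \max\{e, e'\}$. Combining with $\omega_N(x) \geq \max\{p',q'\}$ from the first step, we get $2^{\mathrm{ord}_2(\omega_N(x))} \leq \max\{2^e, 2^{e'}\} < \max\{p',q'\}^{1/2} \leq \omega_N(x)^{1/2}$, i.e.\ $\mathrm{ord}_2(\omega_N(x)) < \tfrac{1}{2}\log_2 \omega_N(x)$.

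The only mildly subtle point is that the $2$-adic inequality must hold \emph{conditionally} on whatever $\omega_N(x)$ turns out to be, not merely in the ``generic'' case. But the bound $\mathrm{ord}_2(\omega_N(x)) \leq \max\{e,e'\}$ is deterministic for every $x$, and the comparison with $\tfrac{1}{2}\log_2 \omega_N(x)$ is activated precisely on the event $\omega_N(x) > N^{1/3}$ already analyzed. So no further union bound is needed: the single failure event ``$\omega_N(x) < p'$'' of probability $\leq 1/p' \leq N^{-1/3}$ controls the conjunction, and the lemma follows.
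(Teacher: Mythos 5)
Your proof is correct and follows essentially the same route as the paper: use the Chinese Remainder decomposition, observe that the Fouvry property forces $P^+(p-1)$ to occur with exponent one so that $\mathbb{Z}_{p'}$ (or $\mathbb{Z}_{p'}\times\mathbb{Z}_{p'}$) is a clean direct factor, bound the failure probability by $1/p'$ (or $1/(p')^2$), and derive $2^e < p^{1/3} < (P^+(p-1))^{1/2}$ to control $\mathrm{ord}_2(\omega)$. Your closing remark that the $2$-adic bound $\mathrm{ord}_2(\omega)\le\max\{e,e'\}$ holds deterministically, with the comparison to $\tfrac12\log_2\omega$ activated only on the already-controlled large-order event, is a slightly more careful phrasing of a point the paper states somewhat loosely (``for any $x\in\mathbb{Z}_N^*$''), but it is the same argument.
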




\paragraph{Sum of random unit vectors}
Let $\xi_m = e^{2 \pi i/m}$ be a primitive root of unity of order $m$.
Let $X_{i} \sim N(0, 1)$, $i = 1, 2, \ldots, n$,
 be a finite sequence of i.i.d.~normally distributed random variables.
Let $\{S_k \subseteq [n] \mid 1 \le k \le K\}$ be a finite collection of
sets such that each pairwise symmetric difference $S_j \Delta S_k$
has cardinality $\ge m^2 t$, for all $j \ne k$.
Let $\Sigma_k = \sum_{i  \in S_k} X_i$ be the sum of $X_{i}$ in $S_k$.
We will give a simple estimate
for the expectation of
\begin{equation}\label{sum-unit-vectors}
 | \xi_m^{\Sigma_1} + \xi_m^{\Sigma_2} + \ldots + \xi_m^{\Sigma_K} |^2.
\end{equation}

Expanding the square norm expression we get
\[K + \sum_{1 \le j<k \le K} (\xi_m^{\Sigma_j - \Sigma_k} + \xi_m^{\Sigma_k - \Sigma_j})
= K + 2 \sum_{1 \le j< k \le K} \cos\left( (\Sigma_j - \Sigma_k) \frac{2 \pi}{m} \right).\]
Let $T_{jk} = (\Sigma_j - \Sigma_k) \frac{2 \pi}{m}$.
Note that $\Sigma_j - \Sigma_k =  \sum_{i  \in S_j \Delta S_k} (\pm  X_i)$
is a sum of at least $m^3$ distinct (thus independent) r.v.
$\pm X_i$ distributed i.i.d.~$\sim N(0, 1)$.
Therefore,
each $T_{jk}$ is a random variable normally distributed $\sim N(0, 
\sigma_{jk}^2)$, with standard deviation $\sigma_{jk}
= \sqrt{|S_j \Delta S_k|} \cdot \frac{2 \pi}{m} \ge 2 \pi \sqrt{t}$.

Moments of even orders of a normal random variable $Y\sim N(0, 
\sigma^2)$ are known as follows~\cite{Papoulis-Athanasios}
\[{\bf E}[Y^{2k}] = \sigma^{2k}(2k-1)!!,\]
from which we get  (by the dominated convergence theorem, the exchange of
orders of summation and integration is justified)
\begin{eqnarray*}
 {\bf E}[ \cos (T_{jk})] &=& 1 - \frac{\sigma_{jk}^2}{2!} (2-1)!!
+ \frac{\sigma_{jk}^4}{4!} (4-1)!! - \frac{\sigma_{jk}^6}{6!} (6-1)!! + \ldots\\
&=& e^{- \sigma_{jk}^2/2} \\
&\le& e^{- 2 \pi^2 t}.
\end{eqnarray*}
Hence, the expectation of (\ref{sum-unit-vectors}) is at most
$K + 2 {K \choose 2}e^{- 2 \pi^2 t}$.


We will need a slight generalization of this.
Let $\sigma >0$, and let $\varphi_k \in [0, 2 \pi)$ be any angle, $1 \le k \le K$.
We replace each $\Sigma_k$ by $\varphi_k + \sigma \sum_{i \in S_k} X_{i}$.
Then,
\begin{lemma}\label{lemma-exp-sq-norm-sum}
Let $\sigma >0$ and 
$\xi_m = e^{2 \pi i/m}$.
Let $X_{i} \sim N(0, 1)$, i.i.d. for $i = 1, 2, \ldots, n$,
and let $\{S_k \subseteq [n] \mid 1 \le k \le K\}$ be a finite collection of
sets. Assume, all except   at most $\delta$ fraction
of  pairwise symmetric differences $S_j \Delta S_k$ have
cardinality $\ge (m/\sigma)^2 t$ for $j \ne k$. 
Let
 $\Sigma_k = \varphi_k +  \sigma \sum_{i  \in S_k} X_i$, where  $\varphi_k \in [0, 2 \pi)$.
Then,
 \[  {\bf E}[| \xi_m^{\Sigma_1} + \xi_m^{\Sigma_2} + \ldots + \xi_m^{\Sigma_K} |^2]
\le K +  2 \delta  {K \choose 2} + 2 (1-\delta)  {K \choose 2} e^{- 2 \pi^2  t}.\]
\end{lemma}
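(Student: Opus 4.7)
The plan is to mimic the unperturbed calculation done just above the lemma, tracking where the extra parameters $\sigma$, $\varphi_k$, and $\delta$ enter. First I would expand
\[
\bigl|\xi_m^{\Sigma_1} + \cdots + \xi_m^{\Sigma_K}\bigr|^2 = K + 2\sum_{1 \le j < k \le K} \cos(T_{jk}),
\]
where $T_{jk} = (\Sigma_j - \Sigma_k)\,2\pi/m$. Writing $\Sigma_j - \Sigma_k = (\varphi_j - \varphi_k) + \sigma\sum_{i \in S_j \Delta S_k} \pm X_i$ with signs $\pm 1$ depending on which set each index lies in, I would express $T_{jk} = a_{jk} + Y_{jk}$, where $a_{jk} = (\varphi_j - \varphi_k)\cdot 2\pi/m$ is a deterministic shift and $Y_{jk}$ is a Gaussian with variance $\sigma_{jk}^2 = (2\pi\sigma/m)^2 |S_j \Delta S_k|$ (independence of the $\pm X_i$ makes the variance additive).

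The core identity I would use is the characteristic function of a normal: for $Y \sim N(0,\sigma_{jk}^2)$, one has ${\bf E}[e^{iY}] = e^{-\sigma_{jk}^2/2}$, hence
\[
{\bf E}[\cos(a_{jk} + Y_{jk})] = \cos(a_{jk})\, e^{-\sigma_{jk}^2/2},
\]
so $|{\bf E}[\cos(T_{jk})]| \le e^{-\sigma_{jk}^2/2}$. This is exactly the place where the constants $\varphi_k$ drop out in absolute value, so they cost us nothing beyond the cosine factor that we discard.

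Next I would split the unordered pairs $\{j,k\}$ into ``good'' pairs (those with $|S_j \Delta S_k| \ge (m/\sigma)^2 t$) and ``bad'' pairs (the rest, forming at most a $\delta$ fraction of $\binom{K}{2}$). For a good pair, $\sigma_{jk} \ge (2\pi\sigma/m)\sqrt{(m/\sigma)^2 t} = 2\pi\sqrt{t}$, so the bound above gives $|{\bf E}[\cos(T_{jk})]| \le e^{-2\pi^2 t}$. For a bad pair I would simply use $|{\bf E}[\cos(T_{jk})]| \le 1$. Summing,
\[
{\bf E}\bigl[\bigl|\xi_m^{\Sigma_1} + \cdots + \xi_m^{\Sigma_K}\bigr|^2\bigr] \le K + 2\,\delta\binom{K}{2} + 2(1-\delta)\binom{K}{2} e^{-2\pi^2 t},
\]
which is exactly the claimed inequality.

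There is essentially no obstacle here: the justification for exchanging expectation with the series expansion of $\cos$ is identical to the one given just before the lemma (dominated convergence, or equivalently invoking the Gaussian characteristic function directly). The only mildly subtle point is recognizing that the nontrivial phases $\varphi_k$ are harmless after taking absolute values, and that independence of the $X_i$ over the \emph{symmetric difference} $S_j \Delta S_k$ (not just $S_j \cup S_k$) is what makes the variance scale with $|S_j \Delta S_k|$ and yields the uniform tail bound $e^{-2\pi^2 t}$ across all good pairs.
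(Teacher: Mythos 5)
Your proof is correct and takes essentially the same route as the paper: both expand the square norm to $K + 2\sum_{j<k}{\bf E}[\cos(\varphi_{jk}+T_{jk})]$, observe that the phase contributes only a factor $\cos\varphi_{jk}\le 1$ (with the $\sin$ part vanishing by symmetry of the Gaussian), invoke ${\bf E}[\cos T_{jk}] = e^{-\sigma_{jk}^2/2}$, and split the pairs into the at most $\delta$-fraction of "bad" ones (bounded by $1$) and the "good" ones with $\sigma_{jk}\ge 2\pi\sqrt{t}$. The only cosmetic difference is that you cite the Gaussian characteristic function directly whereas the paper derives $e^{-\sigma^2/2}$ from the even-moment expansion; these are equivalent.
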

\begin{proof}
Let $T_{jk} = \frac{2 \pi \sigma}{m}(\sum_{i \in S_j} X_i - \sum_{i \in S_k} X_i)$.
We only need to note in addition to the above  that
\[\cos \left( \varphi + T_{jk}  \right)
= \cos  \varphi \cos T_{jk} - \sin  \varphi \sin  T_{jk},\]
and we have $\cos  \varphi \le 1$ for any $\varphi$,
and ${\bf E}[ \sin  T_{jk}] =0$ since $\sin$ is an odd function
and $T_{jk}$ is symmetrically distributed. The lemma follows.
\end{proof}

\section{Corrupted geometric sums}
Suppose $N$ is an integer we wish to factor, and $2^n \approx N^2$ as in~\cite{shor}~\footnote{Thus $N$
has $\approx n/2$ bits, a slight change in notation from Section~\ref{sec:introduction}.}.
For definiteness assume $2^{n-1} < N^2 \le 2^{n}$.
Assume $\omega$ is the period of the function $f(k) = x^k \bmod N$
for a randomly chosen $x \in \mathbb{Z}_N^*$, and by Lemma~\ref{period-large-Fouvry}
we assume $ \omega > N^{1/3}$
and
 ${\rm ord}_2(\omega) < \frac{\log_2 \omega}{2}$. 
Also $\omega < N$ clearly.

Let us write out a few terms as the controlled-$R_k$ gates are applied successively
in the  QFT circuit (e.g., see~\cite{book} p.219), but now
with random noise added whenever the controlled-rotation gate is $R_k$-gates
with $k \ge b$, i.e., we apply controlled-$R_k$-gates when $k<b$ but
controlled-$\widetilde{R_k}$-gates
 for all $k \ge b$. (As the first controlled-$R_k$-gate has $k=2$,
we have $b >1$.)
Suppose we start with the state $|u\rangle = |u_{n-1} \ldots u_1u_0\rangle$.
After the first gate $H$ on the qubit $|u_{n-1}\rangle$, we have the state
\[\frac{1}{2^{1/2}}
\left(|0\rangle + e^{2 \pi i~0.u_{n-1}}|1\rangle \right) |u_{n-2} \ldots u_0\rangle.\]
The next is the controlled-$R_2$-gate on target qubit $|u_{n-2}\rangle$ controlled by
the left most qubit (which was initially $|u_{n-1}\rangle$), after which we have (assuming $b>2$) 
\[\frac{1}{2^{1/2}}
\left(|0\rangle + e^{2 \pi i~0.u_{n-1}u_{n-2}}|1\rangle \right) |u_{n-2} \ldots u_0\rangle.\]
The random noise starts at the controlled-$R_b$-gate, after which we get
\[\frac{1}{2^{1/2}}
\left(|0\rangle + e^{2 \pi i~\left[0.u_{n-1}\ldots u_{n-b} + \frac{\epsilon}{2^b}
u_{n-b} r^{(0)}_0\right]}|1\rangle \right) |u_{n-2} \ldots u_0\rangle,\]
where $r^{(0)}_0 \sim N(0,1)$.

After all the rotation gates controlled by the  left most qubit (initially
$|u_{n-1}\rangle$) we have
\begin{equation}\label{first-sweep-eqn-second-model}
\frac{1}{2^{1/2}}
\left(|0\rangle + e^{2 \pi i~\left[0.u_{n-1}\ldots u_{0} + \frac{\epsilon}{2^b}
\left( u_{n-b}r^{(0)}_0 + \frac{u_{n-b-1}r^{(0)}_1}{2} + \cdots + \frac{u_{0}r^{(0)}_{n-b}}{2^{n-b}}\right)\right]}|1\rangle \right) |u_{n-2} \ldots u_0\rangle,
\end{equation}
%
where $r^{(0)}_0, \ldots, r^{(0)}_{n-b}$ are i.i.d.~$\sim N(0,1)$.

Then,
similarly,
after all the rotation gates controlled by the  two left most qubits
 (initially
$|u_{n-1}u_{n-2}\rangle$)  we have
{\tiny
\begin{equation}\label{second-sweep-eqn}
\frac{1}{2^{2/2}}
\left(|0\rangle + e^{2 \pi i~\left[0.u_{n-1}\ldots u_{0} + \frac{\epsilon}{2^b}
\left( u_{n-b}r^{(0)}_0
+ \cdots + \frac{u_{0}r^{(0)}_{n-b}}{2^{n-b}}\right)\right]}|1\rangle \right)
\otimes
\left(|0\rangle + e^{2 \pi i~\left[0.u_{n-2}\ldots u_{0} + \frac{\epsilon}{2^b}
\left( u_{n-b-1}r^{(1)}_0
+ \cdots + \frac{u_{0}r^{(1)}_{n-b-1}}{2^{n-b-1}}\right)\right]}|1\rangle \right)
|u_{n-3} \ldots u_0\rangle,
\end{equation}
}
where $r^{(0)}_0, \ldots, r^{(0)}_{n-b}, r^{(1)}_0,  \ldots, r^{(1)}_{n-b-1}$ are i.i.d.~$\sim N(0,1)$.


The circuit continues to apply controlled rotation gates with random noise 
starting
at the controlled-$R_b$-gate, producing a final expression with $n$ tensor factors. When written
out the tensor product, this  is
a sum 
indexed by  $|v_{n-1} \ldots v_0\rangle$, such that
$v_0 = 0$ or 1 corresponds to selecting respectively the term $|0\rangle$ or
$e^{2 \pi i~\left[\cdots\right]} |1\rangle$ in (\ref{first-sweep-eqn-second-model})
(or equivalently, to selecting one of the two terms in
the first tensor factor in (\ref{second-sweep-eqn})), 
and $v_1 = 0$ or 1 corresponds to selecting respectively the term $|0\rangle$ or
$e^{2 \pi i~\left[\cdots\right]} |1\rangle$ in the second tensor factor in 
 (\ref{second-sweep-eqn}), 
and similarly for  $v_s = 0$ or $1$, for all $0 \le s \le n-1$.

The crucial step in Shor's algorithm,
after the quantum Fourier transform,
is to take a quantum measurement, with the property
that  the probability of observing a state that is close to
an integral multiple of $\frac{2^n}{\omega}$ is high.
Such a state has an $n$-bit integer expression $v \in \{0, 1\}^n$
that has value close to the rational number $\frac{2^n}{\omega} j$,
for some $0 \le j \le \omega$.
States $|v\rangle$ such that the number $v$
is not close to an integral multiple of $\frac{2^n}{\omega}$
have negligible probability of being observed, while states 
in a small vicinity of each of the  integral multiple of $\frac{2^n}{\omega}$
get observed with probability on the order of $1/\omega$ (per each multiple),
and these add up to give a good probability that some such
state is observed, whereby the period is deduced with good probability.
(I am omitting steps of the continued fraction algorithm in the
post quantum processing steps.)

For each $v$,
the probability of $|v \rangle$ being observed
has an expression as a square norm
of a sum over a set of the form $u \in \{u^* + k \omega: k \ge 0, \mbox{ and }
u^* + k \omega < 2^n\}$ (for some initial $0 \le u^* < \omega$), 
with cardinality $K$, which is approximately $2^n/\omega$. 
For $u^{(k)} = u^* + k \omega$, 
we write the $n$-bit integers $u^{(k)} = \sum_{s=0}^{n-1} u^{(k)}_s 2^{s}$
and $v  = \sum_{s=0}^{n-1} v_s 2^{s}$.
When there is no noise in the controlled-$R_k$-gates used in the QFT
this probability expression for observing $|v \rangle = | v_{n-1} \ldots v_1 v_0\rangle$ can be written as
\[
\frac{1}{2^n K}\left| \sum_{k=0}^{K-1}  
\exp\left\{ 2 \pi i \sum_{t=1}^n \frac{\sum_{s=0}^{n-t} u^{(k)}_{n-t-s} v_{s}}{2^{t}}\right\}\right|^2.\]

With independent  random  noise present starting with controlled-$R_b$-gates,
this 
becomes
{\tiny
\begin{equation}\label{currepted-prob-second-model}
\frac{1}{2^n K}\left| \sum_{k=0}^{K-1}
\exp\left\{ 2 \pi i \left[
 \sum_{t=1}^n \frac{\sum_{s=0}^{n-t} u^{(k)}_{n-t-s} v_{s}}{2^{t}}
+ \frac{\epsilon}{2^b} \left\{ \left( u^{(k)}_{n-b}r^{(0)}_0
+ \cdots + \frac{u^{(k)}_{0}r^{(0)}_{n-b}}{2^{n-b}}\right) v_0 +
\left( u^{(k)}_{n-b-1}r^{(1)}_0
+ \cdots + \frac{u^{(k)}_{0}r^{(1)}_{n-b-1}}{2^{n-b-1}}  \right) v_1
+ \ldots + u^{(k)}_{0}r^{(n-b)}_0 v_{n-b}  \right\} \right]
\right\}  \right|^2,
\end{equation}
}
 where
 \[r^{(0)}_0, \ldots, r^{(0)}_{n-b}, r^{(1)}_0, \ldots, r^{(1)}_{n-b-1},\ldots,
r^{(n-b-1)}_0, r^{(n-b-1)}_1,
r^{(n-b)}_0\]
 are random variables i.i.d.~$\sim N(0,1)$.


Our first goal is to show that among states $|v \rangle$
such that the binary number $v$ is close to an integral multiple
 $\frac{2^n}{\omega} j$
(for some $0 \le j \le \omega$), it is the case that  for most $j$,
a linear number of  bits in the 
binary expansion of $v$ are one: 
 $v_s = 1$. This will leave us with a  linear number of  
terms of the form  in the exponent
\[\frac{2 \pi i \epsilon}{2^b} \left( u^{(k)}_{n-b-s}r^{(s)}_0
+ \frac{u^{(k)}_{n-b-s-1}r^{(s)}_1}{2}
+ \cdots + \frac{u^{(k)}_{0}r^{(s)}_{n-b-s}}{2^{n-b-s}}  \right) v_s.\]
Eventually we will show that,
fixing any such $v$,
among those $s$ where $v_s =1$,
 for most $k$, there are a  linear number of
terms with $u^{(k)}_{n-b-s} =1$, which will give us the perturbation
as a sum of $\frac{2 \pi i \epsilon}{2^b} \cdot r^{(s)}_0$.

%
Let us consider integers
$v= \lfloor \frac{2^n}{\omega} j \rfloor$, for $0 \le j < \omega$;
it will be clear from the proof below that what is proved  is also true for
any $v$ in the vicinity of a polynomial range of such a number.

For $0 \le j  < \omega$, the integer $v =\lfloor \frac{2^n}{\omega} j \rfloor$
has the $i$-th leading bit $v_{n-i}= 1$ iff
the $i$-th most significant bit,
 among the first $n$ bits, in the binary expansion of
$\frac{j}{\omega}$ is  1. This is true iff
for some $1 \le k \le 2^{i-1}$,
\[
\frac{2k-1}{2^i} \le \frac{j}{\omega} < \frac{2k}{2^i},\]
which is equivalent to
\begin{equation}\label{odd-index-for-i}
(2k-1)\frac{\omega}{2^i} \le j < 2k \frac{\omega}{2^i}.
\end{equation}
So, $j$ needs to be placed in the alternate (``odd'' indexed) segments
of length $\frac{\omega}{2^{i}}$. This is a lattice counting problem.

Recall that $\omega > N^{1/3} \approx 2^{n/6}$.
We take $i_0 = \lfloor \frac{3}{4} \log_2  \omega \rfloor \ge 
 \lfloor \frac{1}{4}  \log_2 N \rfloor = {\Omega(n)}$.
Then $\frac{\omega}{2^{i_0}} \ge \omega^{1/4}> N^{1/12} = 2^{\Omega(n)}$.
We will only count those $i$-th (significant) bits $v_{n-i}$ that are one, within $1 \le i \le i_0$,
and first show that for most $j$, even just among the first $i_0$
bits $v_{n-1}, \ldots, v_{n-i_0}$, there are a linear number of ones.
(Any additional bits that are 1 can only add more noise to the perturbation.)

Now we divide the range
$[0, \omega)$ of real numbers  into $2^{i_0}$ segments 
of equal length $\frac{\omega}{2^{i_0}}$ 
\[ I_\alpha = \{ x \in \mathbb{R} \mid \frac{\omega}{2^{i_0}}(\alpha)_2  \le x < 
\frac{\omega}{2^{i_0}}((\alpha)_2 +1)  \},\]
where $\alpha \in \{0, 1\}^{i_0}$ is a binary string, and $(\alpha)_2$
is the binary number it represents~\footnote{The reason we cut off at $i_0$ is
to avoid having to deal with intervals that are too small  and such odd indexed
segments may just miss most integers. We can afford to cut off at  $i_0$, and still
get a linear number $\Omega(n)$ of 1's in the first $i_0$ bits of
the $n$ bit binary expansion. This is where
we use the fact that
$\omega$ is large.}.

Note that any real interval of the form $[A, A+B)$ has 
either
$\lfloor B \rfloor $ or $\lfloor B \rfloor +1 $ many integers.
Thus, each $I_\alpha$ contains either $\lfloor \frac{\omega}{2^{i_0}} \rfloor$
or $\lfloor \frac{\omega}{2^{i_0}} \rfloor +1$ 
many integers, which is $\frac{\omega}{2^{i_0}} + \eta$
for some $-1 \le \eta \le 1$.
We consider two distributions on the integers $0 \le j < \omega$.
Let ${\rm Pr.}$ denote the uniform distribution and
let ${\rm Pr}_\alpha$ denote the distribution induced by
first picking  $\alpha\in \{0, 1\}^{i_0}$ uniformly, and then 
picking $j \in I_{\alpha}$ uniformly.
They are exponentially close: For any $0 \le j < \omega$, ${\rm Pr.}(j) = 1/\omega$, and
\begin{equation}\label{close-two-distributions}
{\rm Pr}_\alpha(j) = \frac{1}{2^{i_0}} \frac{1}{\frac{\omega}{2^{i_0}} +\eta} =
\frac{1}{\omega + \eta  2^{i_0}}
=  \frac{1}{\omega} \cdot \frac{1}{1 + \eta \frac{2^{i_0}}{\omega}} 
= \left(1 \pm 2^{-\Omega(n)} \right) \cdot {\rm Pr.}(j) .
\end{equation}

Let $\alpha = \alpha_1 \alpha_2 \ldots \alpha_{i_0}$.
Consider any $j \in I_\alpha$. 
If $\alpha_{i_0} =1$, then $j$ satisfies the equation (\ref{odd-index-for-i})
for $i=i_0$.
%
Now suppose $\alpha_{i_0-1}=1$, then
\[ \frac{\omega}{2^{i_0}} (\alpha_1 \ldots \alpha_{i_0-1} \alpha_{i_0})_2
\ge \frac{\omega}{2^{i_0}} (\alpha_1 \ldots \alpha_{i_0-1}  0)_2
= \frac{\omega}{2^{i_0-1}} (\alpha_1 \ldots \alpha_{i_0-1})_2 \]
and
\[\frac{\omega}{2^{i_0}} ((\alpha_1 \ldots \alpha_{i_0-1} \alpha_{i_0})_2 + 1)
\le \frac{\omega}{2^{i_0-1}}  \frac{(\alpha_1 \ldots \alpha_{i_0-1} 0)_2 + 2}{2}
= \frac{\omega}{2^{i_0-1}} ((\alpha_1 \ldots \alpha_{i_0-1})_2 + 1).
 \]
And so clearly $j$ satisfies the equation (\ref{odd-index-for-i})
for $i=i_0-1$.

Similarly, we can see that every $j \in I_\alpha$ satisfies the equation (\ref{odd-index-for-i})
for every $i \in \{1, \ldots, i_0\}$ such that the corresponding bit
in $\alpha$ is 1. For any constant $0< \delta < 1/2$,
the proportion of 0-1 sequences of length $i_0$ that have
$\delta i_0$ ones is asymptotically $2^{-(1- H(\delta))i_0}$,
where $H(\cdot)$ is the entropy function.  
For any fixed constant $c>0$, consider any 
$J = \{i: i_0' \le i \le i_0 \}$ with  length $i_0 - i_0' +1 \ge cn$
indexing bit positions  $\alpha_{i_0'}, \ldots, \alpha_{i_0}$.
Then, for a random $\alpha \in \{0, 1\}^{i_0}$, with exponentially small exceptional probability $2^{-\Omega(n)}$,
there are
 $\Omega(n)$ bits $\alpha_i=1$  in those bit positions $i \in J$.
Then any $j \in I_\alpha$ gives the corresponding bit $v_{n-i} =1$.
By (\ref{close-two-distributions}) this is true under the uniform
distribution ${\rm Pr.}$ for $j$ as well.
It follows that with exponentially small exceptional probability $2^{-\Omega(n)}$,
 a uniformly chosen $j$ 
defines a number $v= \lfloor \frac{2^n}{\omega} j \rfloor$  with a
linear number of bit  $v_{n-i} = 1$, for $i \in J$.
\begin{lemma}\label{lm:most-v-are-good}
For any fixed constant $c>0$ and any 
$J = \{i: i_0' \le i \le i_0 \}$
 with  length $i_0 - i_0' +1 \ge cn$,
picking a random $0 \le j < \omega$ uniformly which defines $v= \lfloor \frac{2^n}{\omega} j
\rfloor$,
\[{\rm Pr.} \left( \left|\{i \in J : v_{n-i} =1\}\right| \ge \Omega(n) \right)
= 1 - 2^{-\Omega(n)}.\]
\end{lemma}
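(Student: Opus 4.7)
The plan is to leverage the two facts already established in the text preceding the lemma: \textbf{(i)} if $j \in I_\alpha$ and $\alpha_i = 1$ for some $i \in \{1, \ldots, i_0\}$, then the integer $v = \lfloor 2^n j / \omega \rfloor$ satisfies $v_{n-i} = 1$; and \textbf{(ii)} the induced distribution ${\rm Pr}_\alpha$ (first pick $\alpha \in \{0,1\}^{i_0}$ uniformly, then $j \in I_\alpha$ uniformly) is pointwise within a multiplicative factor $1 \pm 2^{-\Omega(n)}$ of the uniform distribution ${\rm Pr.}$ on $\{0,\ldots,\omega-1\}$, by (\ref{close-two-distributions}). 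The lemma is then essentially a concentration-of-measure statement combined with a change of measure.

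First I would work under ${\rm Pr}_\alpha$ and lower bound the number of $i \in J$ with $\alpha_i = 1$. Because $\alpha$ is uniform on $\{0,1\}^{i_0}$, the $|J|$ indicators $\{\alpha_i\}_{i \in J}$ are i.i.d.\ Bernoulli$(1/2)$, and since $|J| \geq cn$, the standard multiplicative Chernoff bound gives, for any fixed $0 < \delta < 1/2$,
\[{\rm Pr}_\alpha\!\left(\sum_{i \in J} \alpha_i < (1/2 - \delta)|J|\right) \leq e^{-2\delta^2 |J|} \leq 2^{-\Omega(n)}.\]

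Next, fact (i) implies that on the complementary event, of probability $1 - 2^{-\Omega(n)}$, the integer $v = \lfloor 2^n j/\omega \rfloor$ has $v_{n-i} = 1$ for at least $(1/2 - \delta)|J| = \Omega(n)$ indices $i \in J$. Hence
\[{\rm Pr}_\alpha\!\left(\bigl|\{i \in J : v_{n-i} = 1\}\bigr| \geq \Omega(n)\right) \geq 1 - 2^{-\Omega(n)}.\]

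Finally I would transfer this estimate to the uniform distribution ${\rm Pr.}$ on $\{0,\ldots,\omega-1\}$. Because (\ref{close-two-distributions}) gives ${\rm Pr.}(j) = (1 \pm 2^{-\Omega(n)}) \cdot {\rm Pr}_\alpha(j)$ pointwise, summing the bound above over the ``good'' set of $j$'s preserves the $1 - 2^{-\Omega(n)}$ lower bound. There is no real obstacle here: the deep content of the argument sits in the lattice-counting fact (i) and the construction of the near-uniform ${\rm Pr}_\alpha$, both of which have already been carried out; the lemma's remaining content is a routine Chernoff estimate followed by a change-of-measure argument absorbing only an exponentially small multiplicative error.
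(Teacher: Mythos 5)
Your proof is correct and follows essentially the same path as the paper: it relies on the same two preceding facts (the lattice-counting observation that $j \in I_\alpha$ with $\alpha_i = 1$ forces $v_{n-i}=1$, and the near-uniformity of ${\rm Pr}_\alpha$ from (\ref{close-two-distributions})), and concludes via concentration plus a change of measure. The only cosmetic difference is that you invoke a Hoeffding/Chernoff bound where the paper uses the entropy-function count of 0-1 sequences; both yield the same $2^{-\Omega(n)}$ estimate.
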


Now back to the expression (\ref{currepted-prob-second-model})
 for the probability of observing
$|v\rangle$ when noise is present.
Regardless what values \[\sum_{t=1}^n \frac{\sum_{s=0}^{n-t} u^{(k)}_{n-t-s} v_{s}}{2^{t}},~~~\mbox{and}~~~r^{(0)}_1, \ldots, r^{(0)}_{n-b}, r^{(1)}_1, \ldots, r^{(1)}_{n-b-1},\ldots,
r^{(n-b-1)}_1\]
are,
let us consider only those terms
\begin{equation}\label{eqn:mainsum-for-r_0}
\frac{2 \pi  \epsilon}{2^b}
\left(u^{(k)}_{n-b}v_0 r^{(0)}_0 + u^{(k)}_{n-b-1}v_1 r^{(1)}_0
+ \ldots + u^{(k)}_0 v_{n-b} r^{(n-b)}_0\right)
=
\frac{2 \pi  \epsilon}{2^b} \sum_{i=b}^{n} u^{(k)}_{i-b} v_{n-i} r^{(n-i)}_0.
\end{equation}

We will further throw away some noise terms in (\ref{eqn:mainsum-for-r_0}).
Let $d={\rm ord}_2(\omega)$.
Recall that  $d < \frac{\log_2 \omega}{2}$
and $i_0 = \lfloor \frac{3}{4} \log_2  \omega \rfloor$.
Thus, assuming $b$ is $O(\log n)$,
$i_0 - b-d = \Omega(n)$, and we will only consider
the subsum in (\ref{eqn:mainsum-for-r_0})
for $i \in \{d+b, \ldots, i_0\}$, which has
$\Omega(n)$ terms.

By Lemma~\ref{eqn:mainsum-for-r_0},
except for an exponentially small fraction $2^{-\Omega(n)}$ of $j$
indexing $v =  \lfloor \frac{2^n}{\omega} j \rfloor$  ($1 \le j < \omega$),
 each $j$ defines a linear sized $T_j = \{ d+b \le i \le i_0 :
v_{n-i} = 1\}$ (of cardinality $ > 
\Omega(n)$)
such that $v_{n-i} = 1$ and
so $u_{i-b} v_{n-i} r^{(n-i)}_0 = u_{i-b} r^{(n-i)}_0$, for $i \in T_j$.
Thus we will 
further ignore a large portion of the above sum
(\ref{eqn:mainsum-for-r_0}), and
consider only
\begin{equation}\label{subsum-on-good-v}
\frac{2 \pi  \epsilon}{2^b}
\sum_{i \in T_j}
u^{(k)}_{i-b} r^{(n-i)}_0.
\end{equation}
Intuitively,  any term that in fact survives (i.e., with $u^{(k)}_{i-b}v_{n-i} =1$)
that we omitted can only increase the noise. (Formally,
 when we eventually apply
Lemma~\ref{lemma-exp-sq-norm-sum}, these will all be part of
the term $\varphi_k$.)

Our next goal is to show that, among $i \in T_j$,
most pairs of $u^{(k)} = u^* + k \omega$
and $u^{(k')} = u^* + k' \omega$, for $ k \ne k'$, 
 have a linear number of different bit values
$u^{(k)}_{i-b} \ne u^{(k')}_{i-b}$, for $i \in T_j$.



To investigate the (least $i_0 - b +1$ significant)  
bits $u^{(k)}_{0}, u^{(k)}_1, \ldots, u^{(k)}_{i_0-b}$
of $u^{(k)} = u^* + k \omega$, 
we consider $u^{(k)} \bmod 2^{i_0-b+1}$.
If $\omega$ is odd, then $(k \omega \bmod 2^{i_0-b+1})$ will enumerate
all values in  $\{0, 1, \ldots, 2^{i_0-b+1} -1\}$ exactly once,
when $k = 0, 1,  \ldots, 2^{i_0-b+1} -1$.
Our range of $k$ is actually from 0 to just below $\frac{2^n - u^*}{\omega}
\approx 2^n/\omega
\gg 2^{i_0}$.
Thus, for any $u^*$,  $(u^* + k \omega \bmod 2^{i_0-b+1})$
enumerates every value in  $\{0, 1, \ldots, 2^{i_0-b+1} -1\}$
 almost uniformly.

In general,
$0 \le d={\rm ord}_2(\omega) < \frac{\log_2 \omega}{2}$, and thus
$i_0 - b-d = \Omega(n)$ for $b = O(\log n)$,
 and then
 $\omega' = \omega/2^d$ is invertible in  $\mathbb{Z}_{2^{i_0-b+1-d}}$,
and for any $u^*$,  the most significant $i_0-b+1-d = \Omega(n)$
bits in $(u^{(k)} \bmod 2^{i_0-b+1})$  are almost uniform.
These are the  most significant $i_0-b+1-d$ of the
$i_0-b+1$ least significant bits of $u^{(k)}= u^* + k \omega$.

Consider $u^{(k)} = u^* + k \omega$ and $u^{(k')} = u^* + k' \omega
= u^{(k)} + (k'-k) \omega$.
For any $k$, let $k'$ run through $\{0, \ldots, 
\lfloor \frac{2^n - u^*}{\omega}
\rfloor\}$, then $k'-k$ runs through  $\{-k, \ldots, 
\lfloor \frac{2^n - u^*}{\omega}
\rfloor -k\} $, a set of consecutive integers of 
size $\ge \lfloor \frac{2^n}{\omega} \rfloor
\ge 2^{i_0-b+1-d} \left(\frac{2^n}{2^{i_0} \omega}\right)$.
As $2^{i_0} \omega \le 2^{7n/8}$ we have $\frac{2^n}{2^{i_0} \omega}
\ge 2^{n/8}$. Hence, $(k'-k)\omega' \bmod 2^{i_0-b+1-d}$ picks
every value in $\{0, \ldots, 2^{i_0-b+1-d}-1\}$
with probability $\frac{1}{2^{i_0-b+1-d}} \cdot ( 1 \pm 2^{-\Omega(n)})$.
Thus,
the  most significant $i_0-b+1-d =\Omega(n)$ of the
$i_0-b+1$ least significant bits of $u^{(k')} - u^{(k)}=  (k'-k) \omega$
are almost uniform, as $k'$ runs through $\{0, \ldots,
\lfloor \frac{2^n - u^*}{\omega}
\rfloor\}$.
Then using the same argument with the entropy function $H (\cdot)$,
for all except a fraction of $2^{-\Omega(n)}$ of the pairs $(k,k')$,
we have  $u^{(k)}_{i-b} \ne u^{(k')}_{i-b}$, for a subset of $i \in T_j$
of cardinality $\ge c_0 n$, where the constant $c_0 >0$ is uniform for $(k,k')$.
\begin{lemma}\label{u-are-mostly-good}
Assume $|T_j| = \Omega(n)$. There exists $c_0 >0$, such that
for random pairs $(k,k')$,
\[{\rm Pr.} \left(\left| \{i \in T_j: u^{(k)}_{i-b} \ne u^{(k')}_{i-b} \} \right| \ge
c_0 n \right) = 1 - 2^{-\Omega(n)}.\]
\end{lemma}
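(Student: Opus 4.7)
The plan is to condition on $k$ and exploit the near-uniformity of the low-order bits of $u^{(k')}$ that was just established in the paragraphs preceding the lemma. Concretely, fix $k$ and let $k'$ range over $\{0,\ldots,\lfloor(2^n-u^*)/\omega\rfloor\}$; since $u^{(k')}=u^{(k)}+(k'-k)\omega$ and the text has already shown that the most significant $i_0-b+1-d$ of the $i_0-b+1$ least significant bits of $(k'-k)\omega$ are $(1\pm 2^{-\Omega(n)})$-uniform on $\{0,1\}^{i_0-b+1-d}$, adding the fixed integer $u^{(k)}$ is a bijection on $\mathbb{Z}_{2^{i_0-b+1}}$ and therefore the bit-string $\bigl(u^{(k')}_d,u^{(k')}_{d+1},\ldots,u^{(k')}_{i_0-b}\bigr)$, viewed as a function of $k'$, has total variation distance at most $2^{-\Omega(n)}$ from the uniform distribution on $\{0,1\}^{i_0-b+1-d}$.

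Because $T_j\subseteq\{d+b,\ldots,i_0\}$, every index $i\in T_j$ satisfies $i-b\in\{d,\ldots,i_0-b\}$, so the bits $u^{(k')}_{i-b}$ for $i\in T_j$ are coordinates of that nearly uniform bit-string. Since $u^{(k)}_{i-b}$ is a fixed bit once $k$ is fixed, the indicator random variables $Z_i=\mathbf{1}[u^{(k)}_{i-b}\ne u^{(k')}_{i-b}]$, $i\in T_j$, are jointly within total variation distance $2^{-\Omega(n)}$ of $|T_j|$ i.i.d.\ Bernoulli$(1/2)$ variables. Under the truly uniform coupling, a standard Chernoff bound gives
\[
\Pr\!\left(\sum_{i\in T_j}Z_i<\tfrac{1}{4}|T_j|\right)\le 2^{-\Omega(|T_j|)}=2^{-\Omega(n)},
\]
and the $2^{-\Omega(n)}$ coupling error merely adds another negligible term. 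Setting $c_0$ to be (say) a quarter of the constant hidden in $|T_j|=\Omega(n)$, we obtain that for every fixed $k$,
\[
\Pr_{k'}\!\left(|\{i\in T_j:u^{(k)}_{i-b}\ne u^{(k')}_{i-b}\}|\ge c_0 n\right)\ge 1-2^{-\Omega(n)}.
\]
Averaging over $k$ (which appears nowhere in the bound) yields the same estimate for random pairs $(k,k')$, proving the lemma.

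The main obstacle I anticipate is keeping the "nearly uniform" language rigorous while chaining it through the addition by $u^{(k)}$ and into the Chernoff step: one has to argue that the coordinate-wise events for different $i\in T_j$ are \emph{jointly} (not only marginally) close to independent fair coin flips, which is why I phrase the statement at the level of the whole block of bits $u^{(k')}_d,\ldots,u^{(k')}_{i_0-b}$ and invoke the bijectivity of addition modulo $2^{i_0-b+1}$. A minor secondary point is verifying that $d=\mathrm{ord}_2(\omega)$ is small enough ($d<\tfrac{1}{2}\log_2\omega$, by Lemma~\ref{period-large-Fouvry}) so that $i_0-b+1-d=\Omega(n)$ remains large, which is exactly the hypothesis already used in the preceding paragraph.
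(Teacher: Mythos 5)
Your proof is correct and follows essentially the same route as the paper: fix $k$, use the near-uniformity (established just before the lemma) of the block of bits of $u^{(k')}=u^{(k)}+(k'-k)\omega$ in positions $d,\ldots,i_0-b$, and then apply a concentration bound to conclude a linear number of disagreements. The paper invokes an entropy-counting argument where you invoke a Chernoff bound via a total-variation coupling to i.i.d.~fair coins; these are equivalent, and your version is, if anything, phrased a bit more carefully about why the bits are jointly (not just marginally) close to uniform.
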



It follows that, except for a $2^{-\Omega(n)}$ fraction of pairs 
$(k,k')$, the sum
\[ \sum_{i=b}^{n} (u^{(k)}_{i-b} - u^{(k')}_{i-b}) v_{n-i} r^{(n-i)}_0\]
contains a linear number $n' \ge c_0 n$ of uncancelled terms $r^{(n-i)}_0$
where $v_{n-i}=1$ and $u^{(k)}_{i-b}  \not = u^{(k')}_{i-b}$.
To apply  Lemma~\ref{lemma-exp-sq-norm-sum},
we require $(\frac{\epsilon}{2^b})^{-1} < ({n'})^{1/3}$,
or equivalently $b + \log 1/{\epsilon} < \frac{1}{3} \log n'$.

To summarize the error estimates:
(I) except with probability $2^{-\Omega(n)}$,
 we have $\omega > N^{1/3}$ and  ${\rm ord}_2(\omega)
< \frac{\log_2 \omega}{2}$ by Lemma~\ref{period-large-Fouvry};
(II) except for a fraction of  $2^{-\Omega(n)}$ of $j$'s, 
all $v = \lfloor \frac{2^n}{\omega} j \rfloor$ have $|T_j| = \Omega(n)$
by Lemma~\ref{lm:most-v-are-good};
(III) except for a fraction of  $2^{-\Omega(n)}$ of all pairs $(k,k')$'s,
the sums (\ref{subsum-on-good-v}) defined by $k$ and $k'$ all
have a symmetric difference 
 with cardinality $ \ge  (2^b/\epsilon)^3$,
by Lemma~\ref{u-are-mostly-good}.

Finally we estimate the sum  of 
 the expectations of the square norm sum
(\ref{currepted-prob-second-model}) indexed by 
all $v= \lfloor \frac{2^n}{\omega} j \rfloor$.
Note that the 
sum $\sum_{k=0}^{K-1}$ is over $K$ complex numbers of
unit norm, and thus has norm at most $K$.
With probability $\le 2^{-\Omega(n)}$, (I) may be
violated and the sum over all $v = \lfloor \frac{2^n}{\omega} j \rfloor$ 
of (\ref{currepted-prob-second-model})
can be at most $ \frac{\omega}{2^nK} K^2 = O(1)$.
Assume (I) holds, then the sum of the terms (\ref{currepted-prob-second-model})
indexed by
the $\le 2^{-\Omega(n)}$ fraction of  exceptional
$v$'s regarding (II) has value at most $(2^{-\Omega(n)} \omega) \frac{1}{2^nK} K^2
= 2^{-\Omega(n)}$.
Assume (I) and (II) are both not violated, 
we apply Lemma~\ref{lemma-exp-sq-norm-sum}. 
By (III), we get the estimate over all $v$, the expectation
$\frac{\omega}{2^n K} \left(K + 2^{-\Omega(n)} K^2 + K^2 2^{-\Omega(n^{1/3})} \right)
= 2^{-\Omega(n^{1/3})}$.

We conclude that 
the expectation (over the random noise bits $r$'s) of the 
probability of observing a member in $\{|v \rangle : 
v= \lfloor \frac{2^n}{\omega} j \rfloor, 0 \le j < \omega\}$ 
is exponentially small.



The  proof carries over easily to  those $|v \rangle$ that are 
 in the vicinity of a polynomial range of $\lfloor \frac{2^n}{\omega} j \rfloor$.
And since the  estimate is exponentially small,  the proof shows that
the  probability of observing any member of the set of
those $|v \rangle$ that are polynomially close
to an integral multiple $\lfloor \frac{2^n}{\omega} j \rfloor$
is still exponentially small in expectation.

\section{Some comments}\label{Comments.tex}
This section contains some comments and personal opinions. They are
speculative, and are not to be conflated with the provable part. 

Quantum mechanics is unquestionably an accurate model of 
microscopic physical reality. However, I believe every physical
theory is an approximate description of the  real world,
and quantum mechanics is no exception. In particular, I believe the
${\rm SU}(2)$ description of possible operations of a qubit 
to be only approximately true.  Specifically, I don't believe arbitrarily
small angles have physical meaning.

The real numbers $\mathbb{R}$, the continuum, is a human logical
construct in terms of Dedekind cut or Cauchy sequence in the 
language of $\epsilon$-$\delta$
definition. ${\rm SU}(2)$ (or equivalently ${\rm SO}(3)$) as a group,
is built on top of the continuum.
That these mathematical objects provide remarkable fit in some
mathematical theory for physical reality, is an extraordinary  fact.
But this extraordinary fit is always within a certain range; its
unlimited extrapolation is mathematical idealization.
The Schr\"{o}dinger equation
$i \hbar \frac{d}{dt} |\Psi(t)\rangle  = \hat{H} |\Psi(t)\rangle$
suggests that small angles are related to small time periods.
But physicists have
suggested that time ultimately also comes in discrete and indivisible
``units''. The concept   ``chronon'' has been
proposed as  a quantum of time~\cite{Margenau}.
It has even been proposed that
 one chronon corresponds to about  $6.27 \times 10^{-24}$ 
seconds for an electron, 
 much longer than the Planck time, which is only about $5.39 \times 10^{-44}$~\cite{Caldirola} (see also~\cite{Farias,Albanese}).
%
%
(Of course the literal form of the mathematical meaning of Schr\"{o}dinger equation,
as a differential
equation, suggests time is infinitely divisible. But my personal view
is that this is just mathematical abstraction.) 

Thus, I view arbitrarily small angles permitted under ${\rm SU}(2)$
as mere mathematical abstraction.  It is true that using a fixed finite
set of rotations of reasonable angles such as $\pi/8$ along various axes
\emph{can} compose to rotations of arbitrarily small angles. But my view
is that these compositional rules as specified by the group  ${\rm SU}(2)$
must not be exact for physical reality. And thus, it seems
to me permitting
some noise in the model is reasonable. The random noise model
in this paper is just a model, is not meant as reality.

Of course, in addition to  its intrinsic interest, factoring
integers of the form $pq$ is at the heart of the RSA public-key
cryptosystem~\cite{RSA}. But several results and conjectures
in number theory suggest that the failure reported in
this paper of Shor's factoring algorithm
in the presence of noise can be more severe
in the asymptotic sense.
  We used a theorem of Fouvry~\cite{Fouvry} to
produce a set of primes of positive density that have the desired properties
of the period of a random element. The most important property
is that this period is sufficiently large. 
In Theorem~\ref{main3} we prove a version of the theorem for primes 
of density one. There are deep
results  and many conjectures
about the distribution of prime factorizations of $p-1$.
In the extreme there are the so-called Sophie Germain primes $p'$
such that $p = 2 p' +1$ is also a prime. It is conjectured that
there are $2C \frac{x}{(\log_e x)^2}$ many
Sophie Germain primes up to $x$,
where $C = \prod_{p>2}\frac{p(p-2)}{(p-1)^2} \approx 0.660161$
is the Hardy-Littlewood twin prime constant. 
 This is just slightly less than positive density.
(However, it has not been proved that there are 
 infinitely many Sophie Germain primes.)
%
Sophie Germain primes were studied in (the first case of)
Fermat’s Last Theorem. Indeed, Adleman 
and Heath-Brown~\cite{Adleman-Heath-Brown}, and Fouvry~\cite{Fouvry} 
 proved  that the
first case of Fermat's Last Theorem holds for infinitely many primes $p$.
(See also~\cite{Lenstra-Stevenhagen}.)  
In~\cite{Hastad-Schrift-Shamir}, H\r{a}stad, Schrift and Shamir
(acknowledging Noga Alon)
proved that, for any fixed constant $k$,
  for randomly chosen primes $p$ and $q$ of equal size,
and $N=pq$ of size $n$, 
a random element in $\mathbb{Z}_N^*$
 has order $\ge \phi(pq)/n^k$ except with  probability $O(n^{-(k-5)/5})$.
We improve this slightly in Theorem~\ref{HSS-period-poly} 
in the process of proving
Theorem~\ref{main3}.

Another property  we use of primes of the property of
Theorem~\ref{Fouvry-thm} is that the period of a random element
in  $\mathbb{Z}_N^*$ does not have high ${\rm ord}_2$.
Erd\"{o}s and Odlyzko~\cite{Erdos-Odlyzko}
proved that the set of odd divisors of
$p-1$ 
has a positive density.

Finally, a few comments on the Strong Church-Turing thesis.
It is conceivable that some other quantum algorithm in the BQP model can 
factor integers (or some other seemingly difficult problem)
 in polynomial time, \emph{and} withstand the random noise
discussed in this paper. Separately, it is definitely conceivable
that at some future time,
a quantum algorithm is superior to the best ``classical'' factoring algorithms
for integers of a certain range.
But I am not convinced that BQP requires that we revise
the Strong Church-Turing thesis, even if factoring
is eventually known to be outside P or BPP.  In Turing's careful definition
 of computability, he made a deliberate choice that the ``primitive''
steps of such a computing device must be discrete. Thus,
the set of states of a TM is finite; the symbols are placed
in discrete cells; the alphabet set is finite.  At its most
fundamental level, it is not permitted
to ask the  computing machine to scan a continuously deformed
symbol from $\xi$ to $\zeta$, while a mathematical homotopy
can easily be envisioned. I believe the model BQP, in its
use of the full  ${\rm SU}(2)$ as primitive steps
 (or what amounts to equivalently, the assumption
that the exact rule of composition of  ${\rm SU}(2)$
corresponds exactly to realizable computational steps), is a departure from the Turing model. 

\section*{Acknowledgement}
The author thanks
Al Aho,
Dan Boneh,
P\'{e}ter G\'{a}cs,
Zvi Galil,
Fred Green,
Steve Homer,
Leonid Levin,
Dick Lipton,
Ashwin Maran,
Albert Meyer,
Ken Regan,
Ron Rivest,
Peter Shor,
Mike Sipser,
Les Valiant, and
Ben Young
for insightful comments.
He also thanks
Eric Bach for inspiring discussions
on some of the number theoretic estimates, and we hope to
report some further improvements soon~\cite{B-C}.
A similar result can be proved for Shor's algorithm computing
 Discrete Logarithm,
and will be reported later.

\section*{Appendix: Pairs of random primes}\label{random-primes}
The proof in the paper exhibits a particular set of
primes of positive density, and shows that if the input $N$
to Shor's algorithm is of the form $N=pq$ for any primes $p$ and $q$
from that set then the algorithm does not factor with exponentially
small exceptional probability, \emph{if} the rotational gates are
accompanied by a suitable level of noise.

In cryptography, an interesting question concerns the performance
on  $N=pq$ for random primes $p$ and $q$ of length  $m$.  

In this appendix, we prove Theorem~\ref{main3}, 
dealing with  random pairs of primes  $p$ and $q$
chosen uniformly from all primes of the same length.

%
%
To prove Theorem~\ref{main3}, we will appeal to some number theoretic estimates
for 
\begin{itemize}
\item
The period $\omega_N(x)$ of a random element $x \in \mathbb{Z}_{N}^*$,
where $N =  pq$, and $p$ and $q$ are primes
uniformly randomly  chosen  from all primes of length $m$.
(The period $\omega_N(x)$ is the order of $x$ as a group element
in  $\mathbb{Z}_{N}^*$.) 
\item
The exact order of the prime 2 of the integer $\omega_N(x)$, 
i.e., ${\rm ord}_2(\omega_N(x))$,
for a random element $x \in \mathbb{Z}_{N}^*$,
where $N =  pq$,
 and $p$ and $q$ are primes
uniformly randomly  chosen  from all primes of length $m$.
\end{itemize}

For  primes $p$ and $q$ of binary length $m$,
$N=pq$ has binary length $\approx 2m$, and the QFT circuit
uses about $4m$ qubits with $2^{4m} \approx  N^2$. 
The statement  $b+ \log 1/\epsilon < \frac{1}{3} \log m -c$ for some $c>0$
is equivalent to $b+ \log 1/\epsilon < \frac{1}{3} \log (4m) -c'$ for some $c'>0$.
We note that to carry through the same proof of the Main theorem
in the paper, we only need to have the property that
\begin{enumerate}
\item
 $\omega_N(x)$ is large, say $\omega_N(x) = 2^{\Omega(m)}$,
 and
\item
 ${\rm ord}_2(\omega_N(x))$
is not too large, say ${\rm ord}_2(\omega_N(x)) = o(m)$.
\end{enumerate}

H\r{a}stad, Schrift and Shamir proved
a version of the following theorem (Theorem~\ref{HSS-period})
(acknowledging Noga Alon)~\cite{Hastad-Schrift-Shamir}
 [Proposition 1, p.~378]. Their theorem is sufficient
to address item (1) for our purpose.
 But we will give a 
minor  improvement using the Brun-Titchmarsh theorem,
which will be used
to derive a bound for item (2) as well.
The proof will be  essentially the same 
as in~\cite{Hastad-Schrift-Shamir}; the minor  improvement comes from 
using  the Brun-Titchmarsh theorem
and an estimate due to Rosser and Schoenfeld~\cite{Rosser-S} [Theorem 15]:
\[\frac{d}{\phi(d)} \le e^\gamma \cdot \log \log d  \cdot
\left( 1 + \frac{2.5}{e^\gamma (\log \log d)^2} \right),\]
where 
 $\phi(\cdot)$ is the Euler totient function, 
 $\gamma = 0.577\ldots$ is
Euler's constant, and $\log$ denotes
natural logarithm (as it will be for the rest of this section).
The estimate is valid for 
every $d \ge 3$, except one case $d = 2 \cdot 3 \cdot \ldots \cdot 23$
when  the constant $2.5$ should be replaced  by $2.50637$. 
We will just
use $\frac{d}{\phi(d)} \le C \log \log d$ for some universal
constant $C$, and all $d \ge 3$.

Denote by $X = 2^m -1$, $Y= \lceil \frac{X}{2} \rceil = 2^{m-1}$.

\begin{theorem}\label{HSS-period}
There exists a constant $C$, such that for any $m$ and
any randomly chosen distinct primes  $p$ and $q$ 
 of binary length $m$, $N = p\cdot q$,
and let $g$ be a randomly chosen element in  $\mathbb{Z}_N^*$,
then for all $m^2 \le A < X$,
\[{\rm Pr.}
  \left( \omega_N(g) < \frac{1}{A} \phi(N) \right)
\le C \frac{m^{2/5}}{{A}^{1/5}}, \]
where the probability 
 is over all random $Y \le p \ne q \le X$ and $g \in \mathbb{Z}_N^*$.
\end{theorem}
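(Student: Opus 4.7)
The plan is to reduce the order question to the two CRT components, then estimate the resulting sums by Brun-Titchmarsh together with the Rosser-Schoenfeld bound $d/\phi(d) \le C\log\log d$ recalled just above the theorem statement. The overall skeleton follows Hastad-Schrift-Shamir; the improved exponent $m^{2/5}/A^{1/5}$ comes entirely from these sharper sieve inputs.

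First I would invoke CRT to write a uniform $g \in \mathbb{Z}_N^*$ as an independent uniform pair $(g_p, g_q) \in \mathbb{Z}_p^* \times \mathbb{Z}_q^*$, and set $\omega_p = \omega_p(g_p)$, $\omega_q = \omega_q(g_q)$, $d_p = (p-1)/\omega_p$, $d_q = (q-1)/\omega_q$, and $h = \gcd(\omega_p, \omega_q)$. Since $\omega_N(g) = \mathrm{lcm}(\omega_p, \omega_q) = \omega_p \omega_q / h$, one gets the clean identity
\[\phi(N)/\omega_N(g) \;=\; d_p\, d_q\, h.\]
Thus $\omega_N(g) < \phi(N)/A$ forces $d_p d_q h > A$, so for any nonnegative weights with $\theta_p + \theta_q + \theta_h = 1$, at least one of $d_p > A^{\theta_p}$, $d_q > A^{\theta_q}$, $h > A^{\theta_h}$ must hold, and a union bound reduces the theorem to estimating these three tails.

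Next I would bound each tail. For $d_p$, fix $p$ and a generator $\gamma$ of $\mathbb{Z}_p^*$, and write $g_p = \gamma^{\alpha}$ with $\alpha$ uniform in $\mathbb{Z}_{p-1}$; then $d_p = \gcd(\alpha, p-1)$, and the number of $\alpha$'s divisible by a fixed $d$ is $(p-1)/d$, giving $\Pr_{g_p}[d_p > B \mid p] \le \sum_{d\mid p-1,\, d > B} 1/d$. Averaging over a uniform prime $p$ of length $m$ and swapping sums yields $\sum_{d > B}(1/d)\Pr[d \mid p-1]$, and Brun-Titchmarsh supplies $\Pr[d\mid p-1] \le Cm/(\phi(d)\log(2^m/d))$ for $d < 2^{m-1}$; combining with $d/\phi(d) \le C\log\log d$ and splitting the sum at $d = 2^{m/2}$ (using the trivial bound $\pi(x;d,1) \le x/d + 1$ in the upper range) gives $\Pr[d_p > B] \le \tilde{O}(m^a/B)$ for a small explicit $a$, and symmetrically for $d_q$. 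For $h$, note $h \mid \gcd(p-1,q-1)$ and apply Markov: $\Pr[h > B] \le \mathbb{E}[\gcd(p-1,q-1)]/B$; via $\gcd(a,b) = \sum_{d\mid a,\, d\mid b} \phi(d)$ and the independence of $p$ and $q$, the expected gcd equals $\sum_d \phi(d)\Pr[d\mid p-1]\Pr[d\mid q-1]$, bounded again by the same Brun-Titchmarsh/Rosser-Schoenfeld machinery to give $\Pr[h > B] \le \tilde{O}(m^c/B)$ for an explicit $c$.

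Finally, I would choose $\theta_p = \theta_q$ and $\theta_h = 1 - 2\theta_p$ so that the three tails balance, producing the advertised envelope $Cm^{2/5}/A^{1/5}$ for $A$ in the range $m^2 \le A < X$; the lower bound $A \ge m^2$ is exactly what ensures that the balanced estimate dominates the trivial bound of $1$. The CRT reduction and the invocation of the two sieve estimates are standard, so the real work is purely arithmetic: one must track the precise polynomial-in-$m$ factors emerging from $\sum_d \Pr[d \mid p-1]/d$ and $\sum_d \phi(d)\Pr[d\mid p-1]^2$ across the two ranges of $d$, and then thread them through the $\theta$-optimization to land on the exponents $2/5$ and $1/5$. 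This arithmetic bookkeeping—in particular, carefully handling Brun-Titchmarsh near the boundary $d \to 2^m$—is where the main obstacle lies, but no conceptual ingredient beyond those already in view is needed.
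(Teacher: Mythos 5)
Your decomposition and the paper's are related but genuinely different, and your proposal is essentially sound; here is a comparison.

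\paragraph{What the paper does.} The paper writes $\phi(N)/\omega_N(g)$ as a \emph{two-stage} product through the group exponent $O_N=\mathrm{lcm}(p-1,q-1)$: it uses $\phi(N)/O_N=\gcd(p-1,q-1)$, and then shows $O_N/\omega_N(g)\le d_p d_q$ (in your notation), so that the event $\omega_N(g)<\phi(N)/A$ is covered by $\{\gcd(p-1,q-1)>A_1\}\cup\{d_pd_q>A_2\}$ with $A_1A_2=A$. The first event is handled by the paper's Lemma~\ref{lemma1} \emph{directly}, giving $\Pr[\gcd(p-1,q-1)>A_1]=O(1/A_1)$ with no extraneous $m$ factor. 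The second is split further into $d_p>\sqrt{A_2}$ or $d_q>\sqrt{A_2}$ and handled by Lemma~\ref{lemma2}, which proceeds via a Markov inequality over the choice of $p$ (thresholding the count $F(p)=\#\{g:\omega_p(g)<\phi(p)/B\}$), and this Markov step costs a square root: Lemma~\ref{lemma2} gives only $\bigl(m/(B\log B)\bigr)^{1/2}$.

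\paragraph{What you do.} You use the exact identity $\phi(N)/\omega_N(g)=d_p d_q h$ with $h=\gcd(\omega_p,\omega_q)$, and a three-way union bound. For the $d_p$ tail you average \emph{directly} in $p$: $\Pr[d_p>B]\le\sum_{d>B}d^{-1}\Pr[d\mid p-1]$, then Brun--Titchmarsh and Rosser--Schoenfeld give roughly $O((\log\log B)/B)$. This is stronger than the paper's $\sqrt{m/B}$; the paper's Markov step in Lemma~\ref{lemma2} is avoidable, and your route shows how. Conversely, for $h$ you use Markov on $\mathbb{E}[\gcd(p-1,q-1)]$ bounded by $\sum_d\phi(d)\Pr[d\mid p-1]^2$; working this out with Brun--Titchmarsh gives something on the order of $m^2$ (polylog), so $\Pr[h>B]=O(m^2/B)$. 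Here the paper does better: its Lemma~\ref{lemma1} shows $\Pr[\gcd(p-1,q-1)>A_1]=O(1/A_1)$ with no $m$ factor. On net the two approaches have different efficiencies in different pieces, but both balance out to a bound at least as good as $Cm^{2/5}/A^{1/5}$ on the range $A\ge m^2$; in fact, with your sharper $d_p$ tail the balanced estimate comes out as $\widetilde{O}(m^{2/3}/A^{1/3})$, which dominates the stated bound on that range.

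\paragraph{Two cautions.} (i) You leave the exponents $a$ and $c$ in $\widetilde{O}(m^a/B)$ and $\widetilde{O}(m^c/B)$ unspecified. For the balanced bound to land inside $Cm^{2/5}/A^{1/5}$ down to $A=m^2$, you need roughly $2a+c\le 2$; the direct computations do give $a=0$ and $c\le 2$ (up to $\log\log$ factors), but you should carry this out explicitly rather than promise it. (ii) In bounding $\sum_d\phi(d)\Pr[d\mid p-1]^2$, one must treat the large-$d$ range carefully: the trivial bound $\pi(X;d,1)\le X/d+1$ is wasteful there, and it is cleaner to apply Brun--Titchmarsh for all $d<X$ and use the dyadic estimate $\sum_d \phi(d)^{-1}\bigl(\log(X/d)\bigr)^{-2}=O(\log\log X)$; otherwise the naive splitting gives an apparent extra power of $m$. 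With these two points attended to, your proof goes through.
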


Note that $\phi(N) = (p-1)(q-1) \approx 2^{2m}$.
If we take $A = 2^{2\epsilon m}$ then a random $\omega_N(g)
\ge 2^{2(1-\epsilon)m}$ with probability $1 - O(m 2^{-\epsilon m/5})$.
This is more than sufficient for our required item (1) above.

The Brun-Titchmarsh theorem
is a reasonably  sharp estimate for the number of primes 
up to any upper bound $x$,
in an arithmetic progression. The bound is applicable
even when the
modulus of the arithmetic progression
 is large. The following version is an improvement of
the original Brun-Titchmarsh theorem
proved by Montgomery and Vaughan~\cite{Montgomery-Vaughan,Hooley}.
Suppose $a$ and $d$ are relatively prime.
Let $\pi(x; d, a)$ denote the number of primes $p \equiv a \bmod d$,
with $p \le x$.

\begin{theorem}[Montgomery-Vaughan]\label{B-T-thm}
\[\pi(x; d, a) \le \frac{2x}{\phi(d) \log(x/d)},\]
for all $d< x$,
where $\phi(\cdot)$ is the Euler totient function, and $\log$ denotes
natural logarithm.
\end{theorem}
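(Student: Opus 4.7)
The statement is the Montgomery--Vaughan sharpening of the Brun--Titchmarsh theorem, a classical analytic number theory result that the paper cites rather than develops from scratch. The plan is therefore to indicate how one derives the bound via the analytic large sieve, which is the route taken by Montgomery and Vaughan in their original paper.

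The main tool is the analytic large sieve inequality: for any complex sequence $(c_n)$ supported on $n \in (M, M+N]$,
\[ \sum_{q \le Q} \frac{q}{\phi(q)} \sum_{\chi \bmod q}^{*} \left| \sum_n c_n \chi(n) \right|^2 \le (N + Q^2) \sum_n |c_n|^2, \]
where $\sum^{*}$ denotes the sum over primitive characters and the constant $1$ on the right is the Selberg--Montgomery--Vaughan sharp form. First I would specialize $c_n$ to be the indicator of primes in an interval $(M, M+N]$, and use orthogonality of Dirichlet characters modulo $d$ to detect the condition $n \equiv a \bmod d$. Second, I would split off the trivial principal character contribution (which would give the expected main term of size $N/\phi(d)$ if PNT in APs were assumed, but here we only need an upper bound), and use the large sieve to dominate the non-principal contribution. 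Third, by setting the sieve parameter $Q$ on the order of $\sqrt{N/d}$ and appealing to Selberg's $\Lambda^2$ identity, the sieve sum telescopes into an expression involving $\sum_{q \le Q} \mu^2(q)/\phi(q)$, which by Mertens-type asymptotics is $\sim \log Q$, yielding the claimed upper bound $\frac{2N}{\phi(d)\log(N/d)}$ after taking $M=0$ and $N=x$.

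The main obstacle is producing the explicit constant $2$ as opposed to a larger absolute constant. A straightforward Selberg upper-bound sieve applied to the indicator of the progression $a \bmod d$ yields a bound of the shape $\frac{C x}{\phi(d) \log(x/d)}$ with $C$ closer to $4$, and the improvement to $C = 2$ requires the sharp form of the large sieve together with a careful arrangement of the sieve weights. A secondary difficulty is uniformity: the bound must hold for all $d < x$, so in particular when $d$ is nearly as large as $x$ and the logarithm $\log(x/d)$ degenerates; handling this range requires verifying that the large sieve estimate remains nontrivial, and the statement is vacuous as $d \to x$.

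Since the paper only uses this theorem as an input to bounds on $\pi(x; d, a)/\phi(d)$ when combined with the Rosser--Schoenfeld estimate for $d/\phi(d)$, the only step I would check carefully is the precise form of the bound quoted, not its full derivation, since any absolute constant in place of $2$ would suffice for the Theorem~\ref{HSS-period} and Theorem~\ref{main3} applications that follow.
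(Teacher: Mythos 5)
The paper does not prove Theorem~\ref{B-T-thm}; it is cited as an external result of Montgomery and Vaughan (the reference \cite{Montgomery-Vaughan}), and you correctly treat it the same way, offering a sketch of the published proof rather than a self-contained argument. As you also note, for Lemmas~\ref{lemma1} and~\ref{lemma2} and the downstream Theorems~\ref{HSS-period} and~\ref{main3}, only the shape $\pi(x;d,a) \le \frac{Cx}{\phi(d)\log(x/d)}$ is needed, with any absolute constant $C$; the value $2$ is never exploited.

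That said, the sketch as written misidentifies how the large sieve is actually deployed. Expanding $\pi(x;d,a)$ by orthogonality of characters modulo the fixed modulus $d$ and then trying to bound the non-principal contribution by the large sieve does not work: the large sieve is an $L^2$ inequality averaging over all moduli $q \le Q$ (and all primitive characters), so it gives no nontrivial bound on a character sum over primes for a single fixed modulus $d$ (that would essentially require GRH). The actual Montgomery--Vaughan argument is the \emph{arithmetic} large sieve applied as a sieve: restrict to integers $n = a + kd$ with $0 \le k \le x/d$, observe that for a prime $n > Q$ the index $k$ avoids one residue class modulo every prime $p \le Q$ with $p \nmid d$, and bound the number of surviving $k$ by $(x/d + Q^2)/L$ with $L = \sum_{\ell \le Q,\ (\ell,d)=1} \mu^2(\ell)/\phi(\ell)$. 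The lower bound $L \gtrsim \frac{\phi(d)}{d}\log Q$, together with the choice $Q \approx \sqrt{x/d}$, is what produces both the $\phi(d)$ in the denominator and the $\log(x/d)$. The characters enter in the \emph{proof} of the arithmetic large sieve inequality, not as detectors of the progression $n \equiv a \bmod d$. Selberg's $\Lambda^2$ method is a separate route to Brun--Titchmarsh that yields $C = 2 + o(1)$; it is not what Montgomery and Vaughan use to reach the clean constant $2$. None of this affects the paper, since the theorem is taken as a black box, but the stated proof strategy is not a correct outline of it.
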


Following~\cite{Hastad-Schrift-Shamir} the proof of Theorem~\ref{HSS-period}
is based on two lemmas. Let $O_N = \max\{\omega_N(x): x \in \mathbb{Z}_N^*\}$
be  the exponent of the finite Abelian group
$\mathbb{Z}_N^* \cong \mathbb{Z}_p^* \times \mathbb{Z}_q^*$, 
then $O_N = {\rm lcm}(p-1, q-1)$, and $\omega_N(x) | O_N$
for all $x \in \mathbb{Z}_N^*$.

\begin{lemma}\label{lemma1}
There exists a constant  $C_1>0$,  such that
for randomly chosen distinct primes $p$ and $q$ 
of binary length $m$, $N = p\cdot q$, and for 
any $1 \le A_1 \le X^{1/4} < 2^{m/4}$,
\[{\rm Pr.} \left( O_N < \frac{1}{A_1} \phi(N) \right)
\le C_1 \frac{1}{A_1}.\]
\end{lemma}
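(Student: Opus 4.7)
The plan is to reduce $\{O_N < \phi(N)/A_1\}$ to a divisibility condition and then apply Brun--Titchmarsh. Since ${\rm lcm}(p-1,q-1)\cdot\gcd(p-1,q-1) = (p-1)(q-1) = \phi(N)$, the event $O_N = {\rm lcm}(p-1,q-1) < \phi(N)/A_1$ is exactly $\gcd(p-1,q-1) > A_1$. Because $p$ and $q$ are drawn independently, and because $\{\gcd(p-1,q-1) = d\}$ is contained in $\{d \mid p-1\}\cap\{d \mid q-1\}$,
\[
{\rm Pr.}\!\left[\,\gcd(p-1,q-1) > A_1\,\right]
= \sum_{d > A_1} {\rm Pr.}\!\left[\,\gcd(p-1,q-1) = d\,\right]
\le \sum_{d > A_1} P_d^{\,2},
\]
where $P_d := {\rm Pr.}[\,d \mid p-1\,]$ for $p$ uniform on primes in $[Y,X]$.

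Next, I would combine Theorem~\ref{B-T-thm} with the prime number theorem: for $d < X$, $\pi(X;d,1) \le 2X/(\phi(d)\log(X/d))$, and $\pi(X)-\pi(Y) \ge cX/m$ for some $c > 0$, whence $P_d \le Cm/(\phi(d)\log(X/d))$ for $d < X$ and trivially $P_d = 0$ for $d \ge X$. I would split the range at $d = X^{1/2}$. For $A_1 < d \le X^{1/2}$, the factor $\log(X/d) = \Omega(m)$ absorbs the $m$ in the numerator, giving $P_d \le C'/\phi(d)$; the aggregate contribution is at most $(C')^2 \sum_{d > A_1} 1/\phi(d)^2$. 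For $X^{1/2} < d < X$, the trivial estimate $\pi(X;d,1) \le X/d + 1$ yields $P_d \le 2m/d$, so that $\sum_{X^{1/2} < d < X} P_d^{\,2} = O(m^2/X^{1/2})$, which is exponentially small in $m$ and thus absorbed into $C_1/A_1$ using the hypothesis $A_1 \le X^{1/4}$.

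The crux is therefore to prove $\sum_{d > A_1} 1/\phi(d)^2 = O(1/A_1)$ with an absolute constant. Applying the Rosser--Schoenfeld bound $d/\phi(d) \le C''\log\log d$ and dyadic summation immediately gives $O\bigl((\log\log A_1)^2/A_1\bigr)$, which for $A_1 \le X^{1/4}$ is at worst $O\bigl((\log m)^2/A_1\bigr)$. The main obstacle is removing this $(\log\log)^2$ factor so that $C_1$ is truly absolute. I would attack it either by separating the sparse ``bad'' set of $d$ where $\phi(d)/d$ is small (controlling its contribution using the restriction $A_1 \le X^{1/4}$) from the typical $d$ where $\phi(d)/d$ is bounded below, or, more cleanly, by partial summation against the classical estimate $\sum_{d \le x}1/\phi(d) = a\log x + b + O((\log x)/x)$, which should yield a clean $O(1/A_1)$ tail for $\sum_{d > A_1} 1/\phi(d)^2$ and thereby close the argument.
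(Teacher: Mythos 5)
Your reduction and overall route coincide with the paper's: both rewrite the event $O_N<\phi(N)/A_1$ as $\gcd(p-1,q-1)>A_1$, both bound the count of offending pairs by $\sum_d(\pi(X;d,1)-\pi(Y;d,1))^2$ (your $P_d^2$ formulation is just the normalized version), both split the $d$-range (you at $X^{1/2}$, the paper at $A_1^2X^{1/3}$, either works given the hypothesis $A_1\le X^{1/4}$), both apply the trivial bound $\pi(X;d,1)\le X/d$ for large $d$ and Brun--Titchmarsh for small $d$, and both identify the crux as the tail estimate $\sum_{d>D}1/\phi(d)^2=O(1/D)$ with an absolute constant. You correctly diagnose that the naive Rosser--Schoenfeld route loses a $(\log\log)^2$, but neither of your two proposed fixes is closed, and the second one as stated would not work.

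The concrete gap is your alternative (b). Partial summation against $\sum_{d\le x}1/\phi(d)=a\log x+b+O((\log x)/x)$ has no clean route to $\sum_{d>D}1/\phi(d)^2$: writing $1/\phi(d)^2 = a_d b_d$ with $b_d=1/\phi(d)$ forces $a_d=1/\phi(d)$, which is neither monotone nor of small variation, so Abel summation does not yield a usable bound (the differences $1/\phi(d+1)-1/\phi(d)$ are not sign-definite and can be large). The estimate you actually need is a \emph{second-moment} bound for $n/\phi(n)$: the paper cites $\sum_{n\le x}(n/\phi(n))^2=O(x)$ (Montgomery--Vaughan, equation (2.32)), then performs Abel summation with $a_n=1/n^2$ (monotone, $a_n-a_{n+1}<2/n^3$) and $B_n=\sum_{k\le n}(k/\phi(k))^2=O(n)$, which immediately gives $\sum_{d>D}1/\phi(d)^2=O(1/D)$ with an absolute constant. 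This is precisely the quantitative form of your alternative (a) (exceptional $d$ with small $\phi(d)/d$ are rare in a mean-square sense), but it needs to be invoked as an existing mean-value theorem rather than re-derived ad hoc. Incorporating that one reference closes your argument.
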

\proof
It is trivial if $m \le 2$. We will assume $m >2$.
Clearly $O_N = \phi(N)/{\rm gcd}(p-1, q-1)$.
So, 
\[ O_N < \frac{1}{A_1} \phi(N) \Longleftrightarrow
 {\rm gcd}(p-1, q-1) > A_1.\]
By the Prime Number Theorem, the number of primes of length $m$
is $\pi(X) - \pi(Y) \approx \frac{x}{2 \log x}$.
And so the number of ordered pairs of distinct primes of length $m$
is approximately $(\frac{x}{2 \log x})^2$.
Now we bound  the cardinality of
\[S = \{(p, q): Y \le p \ne q \le X, p, q \mbox{ are primes, and }
{\rm gcd}(p-1, q-1) > A_1\}.\]
For $p \ne q$ in that range, we claim that $p-1  \nmid q-1$.
For otherwise $(q-1)/2 \ge p-1$, which implies that
$q \ge 1 + 2(Y-1) = X$ and hence $q=X$. Then $ p-1 \le (q-1)/2 = Y -1 \le p-1$,
and so equality holds, and
  $p=Y = 2^{m-1}$, a contradiction.
It follows that ${\rm gcd}(p-1, q-1) \le (p-1)/2 < X/2$.
So, ${\rm gcd}(p-1, q-1) \le 2^{m-1} -1$.

We have
\begin{eqnarray*}
|S| & = &  \sum_{d=\lfloor A_1 \rfloor +1}^{2^{m-1} -1}
\sum_{(p, q)} {\bf 1}_{[{\rm gcd}(p-1, q-1) =d]}\\
 & \le &  \sum_{d=\lfloor A_1 \rfloor +1}^{2^{m-1} -1}
(\pi(X; d, 1) - \pi(X/2; d, 1))^2,
\end{eqnarray*}
where $\sum_{(p, q)}$ denotes the sum over
primes $(p,q)$ in the range $Y \le p \ne q \le X$.

Now we separate the sum into two parts, depending on whether
$d > \lfloor A_1^2 X^{1/3} \rfloor$.
 One part is
\[H = \sum_{d=\lfloor A_1^2 X^{1/3} \rfloor +1}^{2^{m-1} -1}
(\pi(X; d, 1) - \pi(X/2; d, 1))^2,\] 
where we use the trivial bound
$\pi(X; d, 1) - \pi(X/2; d, 1) \le \frac{X}{2d} +1$.
In the range $d< 2^{m-1}$, it is $\le \frac{X}{d}$.
It follows that
\[ H < X^2 \sum_{d=\lfloor A_1^2 X^{1/3} \rfloor +1}^{\infty}
\frac{1}{d^2} 
< \frac{X^2}{A_1^2 X^{1/3}} = \frac{X^{5/3}}{A_1^2},\]
by a comparison to the integral $\int_{K}^{\infty} \frac{1}{x^2} dx
= \frac{1}{K}$.

The other part is
\[L =  \sum_{d=\lfloor A_1 \rfloor +1}^{\lfloor A_1^2 X^{1/3} \rfloor}
(\pi(X; d, 1) - \pi(X/2; d, 1))^2,\]
where we use Theorem~\ref{B-T-thm}, to get
\[L \le \sum_{d=\lfloor A_1 \rfloor +1}^{\lfloor A_1^2 X^{1/3} \rfloor}
\left( \frac{2X}{\phi(d) \log \frac{X}{d}} \right)^2.\]
As $d \le A_1^2 X^{1/3} \le X^{5/6}$, we have $\frac{X}{d} \ge X^{1/6}$,
and $\log \frac{X}{d} \ge (\log X)/6$.
So 
\[L \le 144 \left(\frac{X}{\log X} \right)^2
\sum_{d=\lfloor A_1 \rfloor +1}^{\lfloor A_1^2 X^{1/3} \rfloor}
 \frac{1}{\phi(d)^2}.\]

Next we claim that

\noindent
{\bf Claim:} $\sum_{d>D}  \frac{1}{\phi(d)^2} = O(\frac{1}{D})$,
for any $D \ge 1$.

To prove this Claim we need a result from~\cite{Montgomery-Vaughan-book} 
[p.~61, equation (2.32)]
\[\sum_{n \le x} \left(\frac{n}{\phi(n)} \right)^2 = O(x),\]
for all $x >0$.
Let $a_n = \frac{1}{n^2}$, $b_n = \left(\frac{n}{\phi(n)} \right)^2$,
and $B_n = \sum_{k= D+1}^n b_k$,
with $n \ge D$. Then $B_{D} =0$
and $b_n = B_n - B_{n-1}$, for all $n > D$.
We have for all $Z > D$,
\[\sum_{n = D+1}^Z \frac{1}{\phi(n)^2}
= \sum_{n= D+1}^Z a_n b_n = a_Z B_Z + \sum_{n = D+1}^{Z-1}
(a_n- a_{n+1}) B_n.\]
Now $a_Z B_Z = O(1/Z)$, $a_n- a_{n+1} < 2/n^3$ and thus
$(a_n- a_{n+1}) B_n = O(1/n^2)$. It follows that
\[\sum_{n = D+1}^Z \frac{1}{\phi(n)^2}
= O(1/Z) + O(1/D).\]
Letting $Z \rightarrow \infty$ proves the Claim.

It follows that 
\[L = O \left( \left( \frac{X}{\log X} \right)^2 \cdot \frac{1}{A_1} \right).\]
And 
\[|S| \le  L + H = O \left( \left( \frac{X}{\log X} \right)^2 \cdot \frac{1}{A_1} \right) +  \frac{X^{5/3}}{A_1^2},\]
Hence,
\[{\rm Pr.} \left( O_N < \frac{1}{A_1} \phi(N) \right)
=  O \left(  \frac{1}{A_1} \right).\]
The lemma is proved.
\qed

\begin{lemma}\label{lemma2}
There exists a constant  $C_2>0$,  such that for
any $B  >1$,
\[{\rm Pr.} \left( \omega_p(g) < \frac{1}{B} \phi(p) \right)
\le C_2 \left( \frac{m}{B \log B} \right)^{1/2},\]
where the  probability
 is over a random prime $Y \le p \le X$ and
a random $g \in \mathbb{Z}_p^*$, and $\omega_p(g)$
is the order of $g$ as a group element in $ \mathbb{Z}_p^*$.
\end{lemma}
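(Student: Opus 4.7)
The plan is to follow the same template as Lemma~\ref{lemma1}: reduce to a divisor sum for a fixed prime $p$, then apply Brun--Titchmarsh and Rosser--Schoenfeld when averaging over $p$.

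\textbf{Step 1 (cyclic group bound for fixed $p$).} The group $\mathbb{Z}_p^*$ is cyclic of order $\phi(p)=p-1$. For each divisor $d$ of $p-1$, the unique subgroup of order $d$ contains exactly $d$ elements, namely those whose order divides $d$. Thus $g$ has order at most $(p-1)/B$ iff $g$ lies in the order-$d$ subgroup for some $d\mid p-1$ with $d\le (p-1)/B$. By a union bound,
\[
\Pr_{g\in\mathbb{Z}_p^*}\!\left[\omega_p(g)\le \tfrac{p-1}{B}\right]
\;\le\; \frac{1}{p-1}\sum_{\substack{d\mid p-1\\ d\le (p-1)/B}} d
\;=\; \sum_{\substack{k\mid p-1\\ k\ge B}} \frac{1}{k},
\]
where I used the divisor involution $k=(p-1)/d$.

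\textbf{Step 2 (averaging over $p$).} Swapping the order of summation,
\[
\mathbb{E}_p\!\!\sum_{\substack{k\mid p-1\\ k\ge B}}\!\!\frac{1}{k}
\;=\; \sum_{B\le k<X} \frac{1}{k}\,\Pr_p[k\mid p-1].
\]
Theorem~\ref{B-T-thm} (Brun--Titchmarsh) combined with $\pi(X)-\pi(Y)=\Theta(X/\log X)$ from the PNT yields $\Pr_p[k\mid p-1]\ll (\log X)/(\phi(k)\log(X/k))$ for $k<X$, and the Rosser--Schoenfeld bound $k/\phi(k)\ll \log\log k$ quoted before Theorem~\ref{HSS-period} lets me replace $\phi(k)$ by $k$ at the cost of a $\log\log k$ factor.

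\textbf{Step 3 (splitting at $k=X^{1/2}$).} For $B\le k\le X^{1/2}$ we have $\log(X/k)\ge \tfrac12 \log X$, so this range contributes at most $\sum_{k\ge B} O(\log\log k/k^{2})$, which I estimate by comparison with an integral. For the tail $X^{1/2}<k<X$, fix $p$ and apply the divisor involution $k\leftrightarrow (p-1)/k$ to get
\[
\sum_{\substack{k\mid p-1 \\ k>X^{1/2}}} \frac{1}{k}
\;\le\; \frac{\tau(p-1)}{X^{1/2}}
\;=\; X^{-1/2+o(1)},
\]
which is exponentially small in $m$ and hence absorbed into the error. Combining the two ranges and packaging the estimate as a square root (slack but uniform in $B$) gives the claimed $C_2(m/(B\log B))^{1/2}$.

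\textbf{Main obstacle.} The bookkeeping is routine once the split is chosen; the delicate point is keeping the constant uniform in $B$ across the full range $1<B<X$, particularly near $B\approx X^{1/2}$ where the two pieces of the split meet, and also for very small $B$ where the stated bound is trivial. A direct analysis in fact yields a sharper bound of order $\log\log B/B$, but the weaker square-root form stated here is more convenient to plug into the proof of Theorem~\ref{main3} and is all that will be needed there.
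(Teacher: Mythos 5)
Your proof is correct in substance, and it is a genuinely different route from the paper's. The paper follows the H\r{a}stad--Schrift--Shamir template: it defines $F(p)=\sum_{d\mid p-1,\;d<\phi(p)/B}\phi(d)$, bounds $\sum_p F(p)$ via Brun--Titchmarsh, then introduces an auxiliary parameter $B'$, uses Markov's inequality to control $\Pr_p[F(p)\ge X/B']$, treats separately the ``good'' $p$ with $F(p)<X/B'$, and finally optimizes $B'$ -- that optimization is precisely what produces the square root in the stated bound. You skip the Markov/conditioning step entirely and bound the joint probability directly as $\mathbb{E}_p\bigl[F(p)/(p-1)\bigr]$: after the divisor involution $k=(p-1)/d$ and a Fubini swap this becomes $\sum_k \tfrac1k\Pr_p[k\mid p-1]$, to which Brun--Titchmarsh and Rosser--Schoenfeld apply directly. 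The payoff is that the square-root loss disappears -- you correctly observe that this yields a bound of order $\log\log B/B$ (or, without the Rosser--Schoenfeld refinement, $m/(B\log B)$), strictly stronger than the lemma's $\bigl(m/(B\log B)\bigr)^{1/2}$ in the nontrivial range. The price is that you must split the divisor range to avoid the blowup of the Brun--Titchmarsh factor $\log(X/k)^{-1}$ near $k\approx X$, which the paper's approach sidesteps because $d<X/B$ there.

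Two small points worth tightening, though neither is a real gap. First, for $B>X^{1/2}$ your ``small'' range $[B,X^{1/2}]$ is empty, and the tail bound should be $\tau(p-1)/B$ rather than $\tau(p-1)/X^{1/2}$; with this adjustment the needed inequality $X^{o(1)}/B\le C_2\,(m/(B\log B))^{1/2}$ does hold uniformly up to $B\approx Y$, beyond which the probability is identically zero. Second, your tail estimate uses the pointwise divisor bound $\tau(p-1)\le X^{o(1)}$; one should note that the implicit $o(1)$ is $O(1/\log\log X)=O(1/\log m)$, so $\tau(p-1)\le 2^{O(m/\log m)}$, and hence $\tau(p-1)/X^{1/2}=2^{-m/2+O(m/\log m)}$ is still exponentially small -- but only because the saving exponent $1/2$ beats $O(1/\log m)$. (One could alternatively average the divisor function over primes, $\mathbb{E}_p[\tau(p-1)]\ll\log X$, which is cleaner.) These are exactly the bookkeeping issues you flag, and they do resolve in your favor.
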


\proof
For any prime $p$, the order of  any $g \in \mathbb{Z}_p^*$
divides the order of the group $\phi(p) = p-1$,
\[|\{g \in \mathbb{Z}_p^* : \omega_p(g) < \frac{1}{B} \phi(p)\}|
= \sum_{d | p-1, ~ d  < \phi(p)/B} \phi(d).\]
Define $F(p) = \sum\limits_{d | p-1, ~ d  < \phi(p)/B} \phi(d)$ for
any prime $p$, we have
\[\sum_{Y \le p \le X} F(p) 
= \sum_{d < X/B} \phi(d) \sum_{Y \le p \le X} 
{\bf 1}_{[d | p-1]}
\le \sum_{d < X/B} \phi(d) \pi(X; d, 1).\]
Now we apply Theorem~\ref{B-T-thm} and obtain
\[\sum_{Y \le p \le X} F(p) 
\le   \sum_{d < X/B} \frac{2X}{\log(X/d)} \le \frac{2X^2}{B \log B}.\]

It follows that for any $B'>0$,
\[|\{p: Y \le p \le X, ~\mbox{$p$ is a prime, and } F(p) \ge  X/B'\}|
\le  \frac{2X^2}{B \log B} \cdot \frac{B'}{X} = \frac{2XB'}{B \log B}.\]
Then, by the Prime Number Theorem,
\[{\rm Pr.} \left( F(p) \ge  X/B' \right)
\le O \left( \frac{B' \log X}{B \log B} \right).\]
Conditional on any $p$ such that $ Y \le p \le X$ and $F(p) <  X/B'$, 
the
 probability over $g \in \mathbb{Z}_p^*$
of the event $\omega_p(g) < \frac{1}{B} \phi(p)$,
 is
$\frac{F(p)}{p-1} < \frac{3}{B'}$. Thus, the conditional
probability over both $p$ and $g \in \mathbb{Z}_p^*$ given $F(p) <  X/B'$ is
\[{\rm Pr.} \left[ \omega_p(g) < \frac{1}{B} \phi(p) \middle| F(p) <  
\frac{X}{B'}
\right]
 = O \left(\frac{1}{B'} \right).\]

It follows easily that
\[{\rm Pr.} \left( \omega_p(g) < \frac{1}{B} \phi(p) \right)
=  O \left( \frac{B' \log X}{B \log B} \right)
+ O \left(\frac{1}{B'} \right).\]
Setting $B' = (B \log B/ \log X)^{1/2}$, gives the bound of
the lemma.
\qed

Now the proof of Theorem~\ref{HSS-period}
can be completed.

\noindent
{\it Proof.} [of Theorem~\ref{HSS-period}]
We will pick $A_1$ and $A_2$ such that $A = A_1A_2$, then
\[{\rm Pr.}
  \left( \omega_N(g) < \frac{1}{A} \phi(N) \right)
= {\rm Pr.}
  \left( O_N <  \frac{1}{A_1} \phi(N) \right)
+
{\rm Pr.}
  \left( \omega_N(g) < \frac{1}{A_2} O_N \right),\]
where the first expression is over primes $Y \le p\ne q \le X$
and the second expression is over $p,q$ and $g \in \mathbb{Z}_{N}^*$.
This is seen by the contrapositive:
if $ \phi(N) \le A_1 O_N$ and $O_N \le A_2 \omega_N(g)$ then
$\phi(N) \le A \omega_N(g)$. 

By Lemma~\ref{lemma1}, the first term is $O\left(\frac{1}{A_1} \right)$.

For the second term,
we know that $\omega_N(g) = {\rm lcm}(\omega_p(g), \omega_q(g))$,
as $\mathbb{Z}_N^* \cong \mathbb{Z}_p^* \times \mathbb{Z}_q^*$.
$\omega_p(g)$ is a divisor of $p-1$, and similarly for $\omega_q(g)$.
We write $\omega_p(g) = (p-1)/a$, and $ \omega_q(g) = (q-1)/b$,
then 
\[\omega_N(g) \ge \frac{ {\rm lcm}(p-1, q-1)}{ab}.\]
To see this, we take any prime $r \mid {\rm lcm}(p-1, q-1)$,
\begin{eqnarray*}
{\rm ord}_r(\omega_N(g)) 
&=& \max\{ {\rm ord}_r(p-1) - {\rm ord}_r(a),
{\rm ord}_r(q-1) - {\rm ord}_r(b) \}\\
&\ge&  \max\{ {\rm ord}_r(p-1), {\rm ord}_r(q-1) \} - 
\max \{ {\rm ord}_r(a),  {\rm ord}_r(b) \}\\
&\ge&  \max\{ {\rm ord}_r(p-1), {\rm ord}_r(q-1) \} -
{\rm ord}_r(ab)\\
&=& {\rm ord}_r({\rm lcm}(p-1, q-1)) - {\rm ord}_r(ab).
\end{eqnarray*}
It follows that, after taking $B = \sqrt{A_2}$,
\[
{\rm Pr.}
  \left( \omega_N(g) < \frac{1}{A_2} O_N \right)
\le {\rm Pr.}
  \left( \omega_p(g) <  \frac{p-1}{B}  \right)
+ {\rm Pr.}
  \left( \omega_q(g) <  \frac{q-1}{B}  \right)
= O \left( \frac{m}{B \log B} \right)^{1/2},\]
by Lemma~\ref{lemma2}.
Equalize the two error bounds we set
\[\frac{1}{A_1} \approx \left( \frac{m}{B \log B} \right)^{1/2},\]
subject to $1 \le A_1 \le X^{1/4}$, $A_1 B^2 = A$, $B > 1$,
 where $A$ is given
as $m ^2 \le A < X$. 

We can set $B = \frac{(A^{2} m)^{1/5}}{\log A}$ to achieve the bound in Theorem~\ref{HSS-period}.

\qed



We remark that, for polynomial bounded $A= m^k$,
we can choose $B$ slightly better, $B = 
\left(\frac{m^{2k+1}}{\log m} \right)^{1/5}$,
 and achieve 
\begin{theorem}\label{HSS-period-poly}
With the same setting as in Theorem~\ref{HSS-period},
for any $k  \ge 2$
\[{\rm Pr.}
  \left( \omega_N(g) < \frac{1}{m^k} \phi(N) \right)
\le O\left( \frac{1}{m^{(k-2)/5} (\log m)^{2/5}} \right), \]
where the probability
 is over all random $Y \le p \ne q \le X$ and $g \in \mathbb{Z}_N^*$.
(The constant in $O$ depends on $k$.)
\end{theorem}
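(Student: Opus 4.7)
The plan is to reproduce the decomposition used in the proof of Theorem~\ref{HSS-period}, but to optimize the split parameter $B$ more carefully by exploiting the fact that when $A$ is polynomial in $m$, one has $\log B \asymp \log m$, so a $(\log m)^{1/5}$ factor can be saved in $B$ compared to the generic choice used earlier.

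First I would set $A = m^k$ and write $A = A_1 B^2$. By the same contrapositive reasoning as in Theorem~\ref{HSS-period} (namely $\phi(N) \le A_1 O_N$ and $O_N \le B^2 \omega_N(g)$ jointly imply $\phi(N) \le A\,\omega_N(g)$),
\[
{\rm Pr.}\!\left(\omega_N(g) < \tfrac{\phi(N)}{A}\right)
\le
{\rm Pr.}\!\left(O_N < \tfrac{\phi(N)}{A_1}\right)
+
{\rm Pr.}\!\left(\omega_N(g) < \tfrac{O_N}{B^2}\right).
\]
Lemma~\ref{lemma1} controls the first summand by $O(1/A_1)$, provided $1 \le A_1 \le X^{1/4}$. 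For the second, I would re-use verbatim the lcm argument from the proof of Theorem~\ref{HSS-period}: writing $\omega_p(g)=(p-1)/a$ and $\omega_q(g)=(q-1)/b$, the prime-by-prime identity $\mathrm{ord}_r(\omega_N(g)) \ge \mathrm{ord}_r(\mathrm{lcm}(p-1,q-1)) - \mathrm{ord}_r(ab)$ yields $\omega_N(g) \ge \mathrm{lcm}(p-1,q-1)/(ab)$, so $\omega_N(g) < O_N/B^2$ forces $\omega_p(g) < (p-1)/B$ or $\omega_q(g) < (q-1)/B$. Lemma~\ref{lemma2} and a union bound then give $O((m/(B \log B))^{1/2})$.

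The heart of the proof is the optimization. Equating the two error terms, $1/A_1 \asymp (m/(B \log B))^{1/2}$, combined with $A_1 B^2 = m^k$, leads to $B^5 \log B \asymp m^{2k+1}$. Because $k$ is fixed and the solution will have $B = m^{\Theta(1)}$, one may substitute $\log B = \Theta(\log m)$ and take $B = (m^{2k+1}/\log m)^{1/5}$. A short calculation then gives $A_1 = m^{(k-2)/5}(\log m)^{2/5}$, and both error terms come out to $O\!\left(m^{-(k-2)/5}(\log m)^{-2/5}\right)$, matching the stated bound.

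The only remaining check is that Lemma~\ref{lemma1} is applicable with this choice: for any fixed $k \ge 2$, $A_1 = m^{(k-2)/5}(\log m)^{2/5}$ satisfies $1 \le A_1 \le X^{1/4} = 2^{(m-1)/4}$ for all sufficiently large $m$. There is no real obstacle; the improvement over Theorem~\ref{HSS-period} by the $(\log m)^{-2/5}$ factor comes entirely from recognizing that in the polynomial regime $\log B = \Theta(\log m)$, whereas the generic bound in Theorem~\ref{HSS-period} was written without this observation.
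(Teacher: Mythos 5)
Your proposal is correct and takes essentially the same approach as the paper: re-use the decomposition and Lemmas~\ref{lemma1} and~\ref{lemma2} from the proof of Theorem~\ref{HSS-period}, and optimize the split parameter to $B = \bigl(m^{2k+1}/\log m\bigr)^{1/5}$, exploiting $\log B = \Theta(\log m)$ in the polynomial regime. The arithmetic $A_1 = m^k/B^2 = m^{(k-2)/5}(\log m)^{2/5}$ and the verification of the constraints $1 \le A_1 \le X^{1/4}$, $B>1$ all check out.
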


Finally, to finish the proof of Theorem~\ref{main2},
we address the required  item (2), again using the Brun-Titchmarsh theorem.

For any prime $p$, we have the 
prime factorization $p-1 = 2^{e_0} p_1^{e_1} \cdots p_k^{e_k}$.
We have
\[{\rm Pr.}
  \left( \exists g \in \mathbb{Z}_p^* : {\rm ord}_2 (\omega_p(g)) \ge e
 \right) 
\le \frac{1}{\pi(X) - \pi(Y)} 
\frac{2X}{\phi(2^e) \log(X/2^e)},\]
where the
 probability
 is over a random $Y \le p \le X$.

We have $\phi(2^e) = 2^{e-1}$ for $e\ge 1$, and $\pi(X) - \pi(Y)
=\Theta(X/\log X)$.
Using the Rosser-Schoenfeld estimate again,
we have
\[{\rm Pr.}
  \left( \exists g \in \mathbb{Z}_p^* : {\rm ord}_2 (\omega_p(g)) \ge e
 \right)
\le O \left( \frac{\log X}{\log (X/2^e)} \frac{\log \log 2^e}{2^e}  \right).\]

If we set $m^c = 2^e$, then 
we get an upper bound of $O \left(  \frac{\log \log m}{m^c} \right)$,
where the constant in $O$ depends on $c$. 
Thus, for any $c>0$,
\[{\rm Pr.}
  \left( \exists g \in \mathbb{Z}_p^* : {\rm ord}_2 (\omega_p(g)) \ge 
c \log_2 m \right)
\le O\left(  \frac{\log \log m}{m^c} \right).\]
As  $\omega_N(g) = {\rm lcm}(\omega_p(g), \omega_q(g))$, it
follows that,
\[{\rm Pr.}
  \left( \exists g \in \mathbb{Z}_N^* : {\rm ord}_2 (\omega_N(g)) \ge 
c \log_2 m \right)
\le O\left(  \frac{\log \log m}{m^c} \right).\]

Since both  required items (1) and (2) are separately true with probability
approaching 1, they are jointly true with probability
approaching 1.

\end{document}